\newcommand{\cut}{\mathrm{cut}}
\newcommand\restr[2]{{
  \left.\kern-\nulldelimiterspace 
  #1 
  \vphantom{\big|} 
  \right|_{#2} 
  }}
\newtheorem{fact}[theorem]{Fact}
\newtheorem*{thm*}{Theorem}
\newtheorem*{prop*}{Proposition}
\newtheorem*{obs*}{Observation}
\newtheorem*{lemma*}{Lemma}
\newtheorem*{rec*}{Recommendation}
\newenvironment{fminipage}%
  {\begin{Sbox}\begin{minipage}}%
  {\end{minipage}\end{Sbox}\fbox{\TheSbox}}
\DeclareMathOperator{\polylog}{polylog}
\newcommand{\newclass}[2]{\newcommand{#1}{{\text{\upshape\sffamily #2}}\xspace}}
\newclass{\NP}{NP}
\newclass{\ZPP}{ZPP}
\newclass{\coNP}{coNP}
\newclass{\BPP}{BPP}
\newclass{\Logspace}{L}
\newclass{\NL}{NL}
\newclass{\coNL}{coNL}
\newclass{\UL}{UL}
\newclass{\coUL}{coUL}
\newclass{\BPL}{BPL}
\newclass{\PL}{PL}
\newclass{\prBPL}{prBPL}
\newclass{\PSPACE}{PSPACE}
\newclass{\EXP}{EXP}
\newclass{\EXPSPACE}{EXPSPACE}
\newclass{\TIME}{TIME}
\newclass{\SPACE}{SPACE}
\newclass{\NSPACE}{NSPACE}
\newclass{\SC}{SC}
\newclass{\coNSPACE}{coNSPACE}
\newclass{\BPSPACE}{BPSPACE}
\newclass{\TFNP}{TFNP}
\newclass{\NC}{NC}
\newclass{\NCo}{NC$^1$}
\newclass{\ACz}{AC$^0$}
\newclass{\ACo}{AC$^1$}
\newclass{\SACo}{SAC$^1$}
\newclass{\TC}{TC}
\newclass{\TCz}{TC$^0$}
\newclass{\TCo}{TC$^1$}
\newclass{\NCt}{NC$^2$}
\newclass{\RNC}{RNC}
\newclass{\PSDNC}{pseudo-deterministic NC}
\newclass{\NUSPL}{non-uniform SPL}
\newclass{\RNCt}{RNC$^2$}
\newclass{\RNCtt}{RNC$^3$}
\newclass{\RNCo}{RNC$^1$}
\newclass{\QNC}{Quasi-NC}
\newclass{\VP}{VP}
\newclass{\CL}{CL}
\newclass{\CLP}{CLP}
\newclass{\CSPACE}{CSPACE}
\newclass{\GapL}{GapL}
\newclass{\MATCH}{MATCH}
\newclass{\DET}{DET}
\newclass{\LOSSY}{LOSSY}
\newclass{\LOSSYNC}{LOSSY[NC]}
\newclass{\LOSSYC}{LOSSY[$\mathcal{C}$]}
\newclass{\ZPC}{ZP-$\mathcal{C}$}
\newclass{\ZPNC}{ZPNC}
\newclass{\Comp}{Comp}
\newclass{\Decomp}{Decomp}
\newcommand{\I}{\mathbb{I}}
\newcommand{\Ind}[1]{\I\left[#1\right]}
\authorrunning{A. Agarwala and N. Varma}
\newif\ifblind
\title{Pseudodeterministic Algorithms for Minimum Cut Problems}
\author{Aryan {Agarwala}}{Max-Planck-Institut f\"{u}r Informatik, Saarland Informatics Campus, Germany} {aryan@agarwalas.in}{https://orcid.org/0000-0001-7047-2650}{}
\author{Nithin {Varma}}{University of Cologne, Germany}{nithvarma@gmail.com}{https://orcid.org/0000-0002-1211-2566}{}
\keywords{Minimum Cut, Pseudodeterministic Algorithms}
\begin{document}

\setlength{\abovedisplayskip}{5pt}
\setlength{\belowdisplayskip}{5pt}

\maketitle

\begin{abstract}
    In this paper, we present efficient pseudodeterministic algorithms for both the global minimum cut and minimum $s$-$t$ cut problems. The running time of our algorithm for the global minimum cut problem is asymptotically better than the fastest sequential deterministic global minimum cut algorithm (Henzinger, Li, Rao, Wang; SODA 2024).
    Furthermore, we implement our algorithm in sequential, streaming, \textsf{PRAM}, and cut-query models, where no efficient deterministic global minimum cut algorithms are known.
\end{abstract}

\newcommand{\psdst}{\mathsf{PD}_{s,t}^{\mathcal{A}}}
\newcommand{\psdgb}{\mathsf{PD}_\mathrm{global}^{\mathcal{A}}}
\newcommand{\es}{E^{s}_{\mathrm{star}}}
\newcommand{\esi}[1]{E_{\mathrm{star}}^{< #1}}
\newcommand{\wst}{w_{s}}
\newcommand{\wi}[1]{w_{s}^{< #1}}
\newcommand{\uqtest}{\textsc{Uniqueness-Test}^\mathcal{A}}

\section{Introduction}

Randomization is one of the cornerstones in the design of algorithms and has been effective in the design of simple and fast algorithms for a variety of computational problems. The advantage of deterministic algorithms, on the other hand, is their \emph{replicability}, where multiple runs of the same algorithm on the same input result in the same output. The question of whether randomness results in provably faster algorithms than their deterministic counterparts is fundamental to the field of theoretical computer science. \\ 

\noindent The investigation of pseudodeterminism, initiated by Gat and Goldwasser~\cite{GatGoldwasser2011} has implications to the aforementioned line of work~\cite{GoldreichGoldwasserRon2013,LuOS21}. A pseudodeterministic algorithm for a search problem is a randomized algorithm which outputs the same answer with high probability. For the global minimum cut problem, a randomized algorithm only needs to output with high probability \textit{some} minimum cut, while a pseudodeterministic algorithm needs to output, with high probability, the \textit{same} minimum cut. 
Formally, an algorithm $\mathcal{A}$ is pseudodeterministic if for all inputs $x$, with probability $\rho \geq \frac{2}{3}$, we have $\mathcal{A}(x) = s(x)$, where $s(x)$ is a canonical solution for $x$. 
Pseudodeterministic algorithms capture the replicability property of deterministic algorithms, while being as efficient and simple as randomized algorithms, thereby giving us the best of both worlds. Pseudodeterministic algorithms have been studied for the construction of primes~\cite{OliveiraS17,ChenLuOliveiraRenSanthanam23,Oliveira24}, bipartite matching~\cite{GoldwasserGrossman17}, non-bipartite matching~\cite{AnariVazirani19}, matroid intersection~\cite{GhoshG21}, undirected connectivity~\cite{GrossmanLiu19}, and more. It is an important open direction as to which problems have efficient pseudodeterministic algorithms. 

\noindent \subsection{Pseudodeterministic Global Minimum Cut Algorithm.} In this work, we investigate the design of pseudodeterministic algorithms for the well-studied problem of global minimum cut. 
The global minimum cut problem takes as input a weighted undirected graph $G$  on $n$ vertices and $m$ edges, and produces as output a partition of the vertices of $G$ which minimises the total weight of edges crossing the partition. This is closely related to the minimum $s$-$t$ cut problem, which produces as output a partition of vertices of $G$ such that the vertices $s$ and $t$ lie on different sides of the partition. Naively, the global minimum cut problem can be solved using $n-1$ calls to a minimum $s$-$t$ cut algorithm~\cite{GomoryHu61}. Thus, it was initially believed that global minimum cut is a harder variant of the minimum $s$-$t$ cut problem. Eventually, Nagamochi and Ibaraki~\cite{NagamochiIbaraki92} and Hao and Orlin~\cite{HaoOrlin94} showed that global minimum cut can be solved with running time matching the best known minimum $s$-$t$ cut algorithms at the time. Karger then initiated a line of work on \textit{randomized} algorithms for the global minimum cut problem~\cite{Karger94, KargerStein96, Karger00}, eventually culminating in a $O(m\log^3n)$ time algorithm, which is faster than the best minimum $s$-$t$ algorithms even 25 years later. Thus, global minimum cut became one of the classical examples of how randomness can lead to significantly faster and simpler algorithms. Gawrychowski et al.~\cite{GawrychowskiMozesWeimann24} improved a log factor in Karger's algorithm, giving a $O(m \log^2 n)$ time randomized algorithm for global minimum cut. Recently, Henzinger et al.~\cite{HenzingerLiRaoWang24} obtained an $\tilde{O}(m)$\footnote{The $\tilde{O}$ hides $\polylog(n)$ factors.} time deterministic algorithm for global minimum cut. We note that their algorithm, while matching the runtime of randomized algorithms upto polylog factors, seems to have a high exponent in the log term and is far more complicated than Karger's near-linear time algorithm~\cite{Karger00}. \\

\noindent We present a simple and efficient pseudodeterministic algorithm for the global minimum cut problem that uses, as black-box, any randomized algorithm for global minimum cut.  

\begin{theorem}
\label{thm:sequential}
    Consider a randomized algorithm $\mathcal{A}$ that takes as input a simple graph $G = (V,E)$ with edge weights given by $w: E \to \mathbb{Z}^+$, runs in time $t(n, m)$ and outputs, with probability at least $\frac{2}{3}$, a minimum cut of $G$ with respect to $w$. Then there exists a pseudodeterministic algorithm $\mathcal{A}'$ that, on input $G = (V,E,w)$,  runs in time $O(t(n, m+n) \log (n) \log \log (n))$, and 
    with probability at least $\frac{2}{3}$, outputs a specific global minimum cut of $G$.   
\end{theorem}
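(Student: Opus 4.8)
The plan is to single out a canonical minimum cut by breaking ties with a fixed weight perturbation and then to reconstruct it using only a logarithmic number of black-box calls to $\mathcal{A}$. Concretely, fix an edge ordering $e_1,\dots,e_m$ and let $s(G)$ be the unique minimum cut of $G^\star := (V,E,w^\star)$ with $w^\star(e_i) := w(e_i)\cdot 2^{m} + 2^{i-1}$; since $\sum_{e_i\in\partial S}2^{i-1}<2^{m}$, every minimum cut of $G^\star$ is a minimum cut of $G$, so $s(G)$ is a well-defined canonical minimum cut. The naive idea of running $\mathcal{A}$ on $G^\star$ fails because $w^\star$ has $\Theta(m)$-bit weights; instead we will mimic the selection made by $w^\star$ over $O(\log n)$ rounds, each round working on a graph whose weights are polynomially bounded (so a single call costs $t(n,m+n)$), with $O(\log\log n)$ independent repetitions per round so that a union bound over all rounds still leaves failure probability below $\tfrac13$. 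First I would amplify $\mathcal{A}$ on $G$ to learn the minimum cut value $\lambda$ with probability $1-1/\polylog(n)$.

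Next I would build the subroutine $\uqtest$: given an integer-weighted graph $H$, it calls $\mathcal{A}$ on $H$ and on a few graphs obtained from $H$ by small local perturbations, and uses the returned cuts together with the value $\lambda$ to decide, correctly with probability $1-1/\polylog(n)$, whether $H$ has a unique minimum cut, and if so to output it. The main loop maintains a graph $H$ obtained from $G$ by contracting edges, with the invariant that the image of $s(G)$ is still a minimum cut of $H$; when $\uqtest$ reports that $H$'s minimum cut is unique, that cut must be the image of $s(G)$ and we lift it back and output it. Otherwise I would call (an amplified) $\mathcal{A}$ on a polynomially perturbed copy of $H$ --- the perturbation deterministically imitating the low-order bits of $w^\star$ --- to identify a batch of vertex pairs that $s(G)$ keeps on the same side, verify each such pair by comparing minimum-cut values to $\lambda$, contract them, and recurse. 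The auxiliary graphs here include an apex vertex / star gadget on up to $n$ extra edges, which is why the running time is phrased in terms of $t(n,m+n)$.

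The heart of the argument --- and the step I expect to be the main obstacle --- is proving that the per-round contraction step is simultaneously \emph{safe} (it never contracts two vertices that $s(G)$ separates), \emph{canonical} (the contracted pairs depend only on $H$, not on $\mathcal{A}$'s internal coins), and \emph{progress-making} (the number of distinct minimum cuts of $H$ drops by a constant factor, so that $O(\log n)$ rounds suffice, using the $\binom{n}{2}$ bound on the number of minimum cuts as the starting potential). Safety and canonicity should follow from only ever contracting pairs that have been certified, via amplified calls to $\mathcal{A}$, to lie on the $s(G)$-side dictated by the $w^\star$-tie-break; the delicate part is the geometric progress guarantee, where I would exploit the structure of the family of minimum cuts (every minimum cut of $H$ other than the image of $s(G)$ differs from it in at least one edge, and a well-chosen perturbation can be made to eliminate a constant fraction of them at once). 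If a clean constant-factor-per-round bound resists a direct proof, the fallback is an amortized potential argument showing the total number of rounds is $O(\log(\#\text{minimum cuts})) = O(\log n)$, which is all that the claimed running time requires.
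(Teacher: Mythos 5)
Your proposal takes a genuinely different route from the paper, but it has two substantive gaps that I do not think are mere details.

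First, the paper never contracts anything. It works entirely by weight perturbation: it adds a zero-weight star $\es$ at a fixed source $s$, stitches in the star-indicator weight $\wst$ to make minimum $s$-$t$ cuts unique, proves a structure lemma that the remaining minimum cuts of $(V,E\cup\es)$ under $[w,\wst]$ have pairwise disjoint $T$-sides of equal size, and then binary-searches over a \emph{single vertex threshold} $x$ using a third stitched weight $\wi{x}$ so that exactly one of these cuts survives. Each round is a constant number of calls to $\mathcal{A}$ and a constant-call uniqueness test on graphs with $O(m+n)$ edges and $\poly(n)$ weights, giving the $O(\log n)$ factor by construction rather than by a potential argument. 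Your plan instead contracts vertex pairs and hopes the number of minimum cuts drops geometrically. You explicitly flag that the geometric-progress lemma is unresolved, and it genuinely is: nothing in your sketch pins down a perturbation that kills a constant fraction of minimum cuts per round, and the $\binom{n}{2}$ bound on the number of minimum cuts is a starting point, not an argument. The paper sidesteps this entirely because the disjoint-$T$-sides lemma reduces the ``how many rounds'' question to a binary search over $n$.

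Second, your safety/canonicity test is weaker than what you need. Verifying a candidate contraction $(u,v)$ by checking that the contracted graph still has min-cut value $\lambda$ only certifies that $u$ and $v$ lie on the same side of \emph{some} minimum cut, not on the same side of your canonical $s(G)$. Two pairs can each pass this test while being certified by \emph{different} minimum cuts, and after contracting both no minimum cut need survive as the image of $s(G)$; the invariant ``image of $s(G)$ remains a minimum cut of $H$'' is precisely the thing that is not maintained. To make the step canonical you would need a way to query, with black-box min-cut calls, which side of the $w^\star$-tie-broken cut a given vertex falls on --- and that is essentially the whole problem restated. The paper's design avoids this circularity because perturbations never destroy any minimum cut of the original instance; they only reorder them, so there is no ``did I contract across the right cut'' question to answer.

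What your approach does buy, if completed, is a more classical isolation-style canonical cut (lexicographically smallest edge set), which is perhaps conceptually simpler to state than the paper's ``the minimum cut whose $T$-side contains the largest-indexed vertex among $\bigcup_i T_i$.'' But as written, the two load-bearing steps --- geometric progress and safe/canonical contraction --- are both missing, and the second looks like a genuine obstacle rather than a routine fill-in.
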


Our result, in conjunction with the algorithm of Gawrychowski et al.~\cite{GawrychowskiMozesWeimann24}, immediately implies a pseudodeterministic global minimum cut algorithm with running time $O(m \log^3 n \log \log n)$, which is significantly faster than the best deterministic algorithm. The relative simplicity of our algorithm is an additional advantage.

\noindent
\subsection{Streaming, Cut Query, and Parallel Models.} 
Since its introduction, pseudodeterminism has been studied in many contexts, such as query complexity~\cite{GoldreichGoldwasserRon2013, ChattopadhyayDahitaMahajan23, GoldwasserImpagliazzoPitassiSanthanam2021}, streaming algorithms~\cite{GrossmanGuptaSellke23, BravermanKrauthgamerKrishnanSapir2023, GoldwasserGrossmanMohantyWoodruff2020}, parallel algorithms~\cite{GoldwasserGrossman17, AnariVazirani19, GhoshGurjarRaj23}, space-bounded computation~\cite{GrossmanLiu19}, and more\footnote{For a better overview of the field, we refer the reader to the doctoral thesis of Grossman~\cite{Grossman2023}.}.
In this paper, we provide efficient implementations of our pseudodeterministic global minimum cut algorithm in the streaming, cut query and parallel models. 
Our pseudodeterministic algorithms are significantly more efficient than the best-known deterministic algorithms in the respective models. In fact, as we discuss below, there exist large gaps between randomized and deterministic algorithms in all these models, unlike in the sequential model. 

\begin{enumerate}
    \item \textbf{Streaming}: In this model, the input graph consists of a stream of insertions or deletions of edges and their associated weights. Specifically, each stream element is either of the form \emph{insert $(e, w_e)$}, where $e \in E$ and $w_e \in \mathbb{Z}^+$, or of the form \emph{delete $e$}. Note that we consider streaming of simple graphs, and therefore, an edge once inserted cannot be inserted again until it is deleted. The typical goal is to compute the global minimum cut using as little space and making as few passes over this stream as possible.

    A randomized exact streaming algorithm for global minimum cut using $\tilde{O}(n)$ space and $\log(n)$ passes was shown by Mukhopadhyay and Nanongkai~\cite{MukhopadhyayNanongkai20}. As far as we are aware, there is no deterministic algorithm for this problem using $O(n^{1.99})$ space and making $\tilde{O}(1)$ many passes. 

    \begin{theorem}
    \label{thm:streaming}
        There exists a pseudodeterministic streaming algorithm for global minimum cut that makes $O(\log^2 n)$ passes over the input and uses $\tilde{O}(n)$ space. 
    \end{theorem}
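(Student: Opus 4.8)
The plan is to instantiate, in the streaming model, the pseudodeterministic reduction underlying Theorem~\ref{thm:sequential}, using as the black-box subroutine $\mathcal{A}$ the randomized streaming algorithm of Mukhopadhyay and Nanongkai~\cite{MukhopadhyayNanongkai20}, which computes a global minimum cut with probability at least $2/3$ using $\tilde{O}(n)$ space and $O(\log n)$ passes. The crucial structural point is that the reduction never reads the edges of $G$ directly: its only access to $G$ is through (i) runs of $\mathcal{A}$ on augmented graphs of the form $G \cup (\es,\wst)$, where $\es$ is a star on $O(n)$ edges centered at a fresh vertex $s$ and the weights $\wst$ are generated by the reduction itself in order to single out a canonical minimum cut by perturbation, and (ii) calls to $\uqtest$, which verify that the current perturbation already induces a unique minimum cut and which themselves amount to a bounded number of such runs of $\mathcal{A}$ followed by $O(n)$-space comparisons of the returned cuts. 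Consequently, a single augmented call to $\mathcal{A}$ can be simulated in the stream by forwarding the true edge stream of $G$ while injecting the internally generated $O(n)$ star edges at pass boundaries; this adds no passes and only $O(n)$ words of memory, so each such call costs $O(\log n)$ passes and $\tilde{O}(n)$ space.

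Next I would track the adaptivity and the probability amplification. The reduction of Theorem~\ref{thm:sequential} runs in $O(\log n)$ \emph{adaptive} rounds $i = 1,2,\dots$, refining the star weights from $\wi{i}$ on $\esi{i}$ to $\wi{i+1}$ according to the outcome of $\uqtest$ in the previous round, and within each round it repeats $\mathcal{A}$ only $O(\log\log n)$ times so that the per-round error is below $1/\polylog n$ and a union bound over all rounds still yields overall success probability at least $2/3$. In the streaming implementation the $O(\log\log n)$ repetitions inside a round are run \emph{in parallel} over the same $O(\log n)$ passes — this multiplies the working space by $O(\log\log n)$ but does not increase the pass count — while the $O(\log n)$ rounds are executed sequentially over fresh passes. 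Hence the total number of passes is $O(\log n)\cdot O(\log n)=O(\log^2 n)$ (number of rounds times passes per round), and the space is $O(\log\log n)\cdot\tilde{O}(n)+O(n)=\tilde{O}(n)$. Pseudodeterminism and correctness of the reported cut transfer verbatim from Theorem~\ref{thm:sequential}, since the streaming algorithm computes exactly the same randomized function of the input.

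I expect the main obstacle to be the faithfulness of the simulation step rather than the bookkeeping. One must verify that handing the self-generated star to the Mukhopadhyay--Nanongkai algorithm is legitimate: in particular that (a) the augmented graph $G \cup (\es,\wst)$ is still a simple graph with integer weights bounded by $\poly(n)$ — which is precisely why the reduction spreads the perturbation over $O(\log n)$ bit-levels $\wi{1},\wi{2},\dots$ rather than using one exponentially small perturbation — and (b) the algorithm of~\cite{MukhopadhyayNanongkai20} is agnostic to the position of the star edges within a pass, so that interleaving them with (or appending them to) the genuine stream preserves its $\tilde{O}(n)$-space, $O(\log n)$-pass guarantee. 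A secondary check is that $\uqtest$ contributes no adaptive passes beyond the $O(\log n)$ rounds already accounted for; this follows by inspecting its definition from the proof of Theorem~\ref{thm:sequential}, where each invocation is a constant-size batch of (parallelizable) $\mathcal{A}$-runs plus $O(n)$-space post-processing.
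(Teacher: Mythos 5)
Your overall plan — instantiate the black-box reduction of \Cref{thm:sequential} (i.e., \cref{alg:global-min-cut} together with $\uqtest$) with the Mukhopadhyay--Nanongkai streaming algorithm~\cite{MukhopadhyayNanongkai20}, simulate each call by injecting the self-generated star edges into the stream with $O(n\log n)$ words of extra memory, and multiply $O(\log n)$ adaptive rounds by $O(\log n)$ passes per call to get $O(\log^2 n)$ passes and $\tilde{O}(n)$ space — is exactly the paper's argument, and your accounting of the $O(\log\log n)$ amplification factor matches the paper's remark that it is absorbed into $\tilde{O}(\cdot)$.

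However, two of your descriptions of the reduction's internals are inaccurate. First, the star center $s$ is \emph{not} a fresh vertex: \cref{alg:global-min-cut} fixes an arbitrary $s\in V$. If $s$ were a new vertex with $w$-weight $0$ on every incident edge, the cut $(\{s\}, V)$ would have $w$-weight $0$ and would trivially dominate, so the reduction would be vacuous; the whole point is that the star edges carry weight $0$ under $w$ but are attached to an existing vertex so the original cut structure is preserved under the stitching $[w,\wst]$. Second, the $O(\log n)$ rounds do \emph{not} come from spreading a perturbation across bit-levels $\wi{1},\wi{2},\ldots$; there is no cumulative multi-level stitching. Every round uses a stitching of depth at most three, $[w,\wst,\wi{x}]$, with total magnitude $\poly(n)$ because $\sum_e\wst(e)\le n$ and $\sum_e\wi{x}(e)\le n$. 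The $O(\log n)$ rounds arise from a \emph{binary search} over the threshold $x\in[1,n]$ guided by \Cref{binary-search}: the algorithm is hunting for the unique interval $(t_{m-1},t_m]$ in which the perturbation $\wi{x}$ isolates a single canonical minimum cut. Since you treat the reduction as a black box these misstatements do not change your pass/space bounds, but the mechanism you describe would not in fact isolate a unique cut if implemented as written, so it is worth recording the correction.
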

    
    
    \item \textbf{Cut Query}: Here, access to the graph is permitted only via an oracle which takes as input a partition of the vertices into two parts and produces as output the total weight of edges which cross this partition. The goal is to compute the global minimum cut making as few queries to this oracle as possible.

    A randomized exact algorithm for global minimum cut using $\tilde{O}(n)$ many cut queries was shown by Mukhopadhyay and Nanongkai~\cite{MukhopadhyayNanongkai20}. As far as we are aware, the best deterministic algorithm is a corollary of the algorithm by Grebinski and Kucherov~\cite{GrebinskiKucherov00} and reconstructs the entire graph using $O(n^2/\log n)$ many queries.

    \begin{theorem}
    \label{thm:cut-query}
        There is a pseudodeterministic global minimum cut algorithm that makes $\tilde{O}(n)$ cut queries.
    \end{theorem}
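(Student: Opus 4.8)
The plan is to reuse the reduction underlying Theorem~\ref{thm:sequential} --- not its statement, but its construction --- instantiating the black-box randomized minimum cut subroutine $\mathcal{A}$ with the cut-query algorithm of Mukhopadhyay and Nanongkai~\cite{MukhopadhyayNanongkai20}, which computes a global minimum cut using $\tilde{O}(n)$ cut queries. The crucial observation is that this reduction is essentially model-oblivious: it never touches $G$ except through (a) black-box invocations of $\mathcal{A}$ on auxiliary graphs $G'$ obtained from $G$ by adding an \emph{explicitly chosen} set $F$ of $O(n)$ edges with prescribed integer weights, together with a uniform rescaling of the original weights, and (b) a handful of evaluations of the weight of a candidate cut. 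In the sequential setting step (a) is charged as $t(n,m+n)$ per call, with $O(\log n\log\log n)$ calls in total; the goal is to show that in the cut-query model each such call can be answered with only $\tilde{O}(n)$ cut queries to $G$, so that the whole algorithm makes $\tilde{O}(n)$ queries while inheriting correctness and the success probability $\geq \frac{2}{3}$ verbatim from Theorem~\ref{thm:sequential}.

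The key step is a cut-query simulation. Fix any $S \subseteq V$ and any auxiliary graph $G'$ formed from $G$ by scaling all edge weights by an integer $W$ and superimposing the gadget $F$; then $\cut_{G'}(S) = W\cdot\cut_{G}(S) + \cut_{F}(S)$. Here $\cut_{G}(S)$ is the answer to a \emph{single} cut query to $G$, while $\cut_{F}(S)$ (the weight of gadget edges crossing $(S,V\setminus S)$) is computed in $O(n)$ time directly from $S$ and the explicitly stored $F$, with no queries. Hence simulating the oracle for $G'$ costs one query to $G$ per query to $G'$, and running the algorithm of~\cite{MukhopadhyayNanongkai20} on $G'$ uses $\tilde{O}(n)$ cut queries to $G$; the same identity lets the reduction evaluate a candidate cut of $G'$ using one query to $G$. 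Summing over the $O(\log n\log\log n)$ black-box invocations and the $\polylog(n)$ cut evaluations yields $O(\log n\log\log n)\cdot\tilde{O}(n) = \tilde{O}(n)$ cut queries overall. Adaptivity causes no trouble: the gadget used for a later invocation may depend on the (explicitly known) outputs of earlier ones, but every invocation is simulated independently via the displayed identity.

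Two points require care, and they constitute the only real obstacle. First, the black box of Theorem~\ref{thm:sequential} is stated for \emph{simple} graphs, so whenever a gadget edge $(v_1,v_i)$ coincides with an edge of $E$ its weight must be folded into the existing edge rather than added as a parallel copy; the simulation identity is unaffected since the perturbation remains explicitly known. Second, the rescaling inflates the edge weights (up to quasi-polynomial magnitude, if Theorem~\ref{thm:sequential} rescales once per round across $O(\log n)$ rounds), but this is harmless because the query complexity of~\cite{MukhopadhyayNanongkai20} depends only polylogarithmically on the weight magnitudes, so on the perturbed instances it remains $\tilde{O}(n)$. The substantive content is thus confined to checking that the canonicalizing perturbation inside Theorem~\ref{thm:sequential} can be taken to be supported on $O(n)$ edges, explicitly computable (so its cut contribution needs no queries), and compatible with the simple-graph and integer-weight constraints of the cut-query black box --- all of which hold by construction. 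With these in hand the result follows.
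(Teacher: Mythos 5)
Your proposal is correct and takes essentially the same route as the paper: both observe that a cut query on the auxiliary graph $G'$ with stitched weights $[w,w']$ can be answered by one cut query $\mathcal{O}(S)$ to $G$ plus an explicit, query-free evaluation of the gadget/perturbation contribution $w'(\cut_{G'}(S))$ (since $\wst$, $\wi{x}$, $w_1$, $w_2$ depend only on $S$ and known parameters), yielding $\mathcal{O}'(S)=\mathcal{O}(S)\cdot(n+1)+w'(\cut_G(S))$, and then plug in the $\tilde O(n)$-query algorithm of Mukhopadhyay--Nanongkai across the $O(\log n)$ invocations. (One small inaccuracy in your write-up: the stitching is always done fresh from the original $w$ on each round, so weights grow only by a fixed $\poly(n)$ factor, not quasi-polynomially --- though as you note this would be harmless either way.)
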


    \item \textbf{Parallel}: We consider the Common-Read-Exclusive-Write ($\mathsf{CREW}$) Parallel Random Access Machine ($\mathsf{PRAM}$) model, where there are $p$ processors connected to a shared random access memory. An arbitrary number of processors can read from a memory cell, whereas only one processor can write on a memory cell in a single time step. 

    In this model, a randomized exact algorithm for global minimum cut with $\tilde{O}(m)$ work and $\tilde{O}(1)$ depth was given by Anderson and Blelloch~\cite{AndersonBlelloch23}. As far as we are aware, the only deterministic $\NC$ algorithm for this problem is by Karger and Motwani~\cite{KargerMotwani97}. While they do not explicitly calculate the work of their algorithm, it is certainly a huge polynomial, and seems to be worse than even, say, $O(n^{10})$.

    \begin{theorem}
    \label{thm:parallel}
        There is a pseudodeterministic global minimum cut algorithm in the PRAM model that does $\tilde{O}(m)$ work and is of depth $\tilde{O}(1)$. 
    \end{theorem}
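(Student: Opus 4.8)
The plan is to obtain Theorem~\ref{thm:parallel} by running the pseudodeterministic reduction underlying Theorem~\ref{thm:sequential} in the $\mathsf{CREW}$ $\mathsf{PRAM}$ model, instantiating the black box $\mathcal{A}$ with the randomized algorithm of Anderson and Blelloch~\cite{AndersonBlelloch23}. Recall that this reduction turns $\mathcal{A}$ into $\mathcal{A}'$ by making $O(\log n \log\log n)$ \emph{adaptive} calls to $\mathcal{A}$ on modified instances with at most $m+n$ edges, interleaved with lightweight bookkeeping: forming the modified instances, evaluating quantities such as $\wi{i}$ and the data describing $\esi{i}$, comparing cut values, running $\uqtest$, and aggregating the outputs of the calls. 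Since $\mathsf{PRAM}$ cost composes cleanly under adaptive composition — depths add (hence multiply by the number of rounds in the worst case) and works add — it suffices to check two things: that the black-box calls, once amplified, fit within $\tilde O(m)$ work and polylogarithmic depth, and that every bookkeeping step does too.

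For the black box: Anderson--Blelloch give a randomized global minimum cut algorithm with $\tilde O(m)$ work and polylogarithmic depth that succeeds with probability at least $2/3$. I first amplify each call to failure probability at most $n^{-10}$ by running $\Theta(\log n)$ independent copies in parallel and returning the most frequent answer (equivalently, the smallest returned cut value together with a cut attaining it); this multiplies the work by $O(\log n)$ and adds $O(\log n)$ to the depth, so an amplified call still costs $\tilde O(m)$ work and polylogarithmic depth. A union bound over the $O(\log n\log\log n)$ calls shows that all of them succeed with high probability, which is exactly what is needed for $\mathcal{A}'$ to be pseudodeterministic. Summing over the rounds, the total work is $O(t(n,m+n)\cdot \log n\log\log n) = \tilde O(m)$ (using $m+n = O(m)$, since otherwise the input graph is disconnected and a minimum cut is found trivially within these bounds), and the total depth stays polylogarithmic, i.e.\ $\tilde O(1)$.

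For the bookkeeping: each primitive reduces to a standard work-efficient $\mathsf{PRAM}$ routine. Building the $(m+n)$-edge modified instance and reweighting edges is $O(m+n)$ work and $O(1)$ depth; summing edge weights across a given partition and computing running quantities such as $\wi{i}$ (and the characteristic data of $\esi{i}$) for all $i$ simultaneously are parallel-prefix computations with $O(m)$ work and $O(\log m)$ depth; extracting the mode of the $\Theta(\log n)$ returned cuts, each an $O(n)$-bit vector, is done by sorting in $\tilde O(n)$ work and polylogarithmic depth; and any step that contracts vertex subsets or inspects the connectivity of an auxiliary graph uses a work-efficient parallel connectivity algorithm (e.g.\ Shiloach--Vishkin), again $\tilde O(m)$ work and polylogarithmic depth. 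All of these proceed with disjoint writes, so $\mathsf{CREW}$ suffices; in any case the usual simulations between $\mathsf{PRAM}$ variants cost only polylogarithmic overhead.

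The main obstacle is the single place where the reduction reads most naturally as a sequential scan over the $n$ vertices — maintaining $\esi{i}$, $\wi{i}$, and invoking $\uqtest$ as $i$ grows — which, transcribed verbatim, would incur depth $\Omega(n)$. The point to verify is that this scan has no genuine sequential dependency: for each $i$ the sets and weights $\esi{i}, \wi{i}$ are prefixes of fixed sequences and are therefore all available after a single parallel-prefix pass, while the black box is invoked only $O(\log n\log\log n)$ times \emph{in total} over the whole reduction rather than once per $i$, so these invocations can be organized into $O(\log n\log\log n)$ adaptive rounds. Granting this, the overall depth is a product of polylogarithmic factors, hence $\tilde O(1)$, and the overall work is $\tilde O(m)$; the remaining details are a routine transcription of the sequential analysis into $\mathsf{PRAM}$ primitives.
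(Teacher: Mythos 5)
Your approach matches the paper's: implement \Cref{alg:global-min-cut} and \Cref{alg:uniqueness-test} on a $\mathsf{CREW}$ $\mathsf{PRAM}$ with~\cite{AndersonBlelloch23} as the black box $\mathcal{A}$, amplify its success probability, and check that each bookkeeping primitive (adding star edges, stitching weights, computing cut weights under $\wi{x}$, comparing cut-sets) is work-efficient with polylogarithmic depth. The ``main obstacle'' you flag is, however, a phantom: the reduction never performs a scan over $i = 1, \dots, n$. \Cref{alg:global-min-cut} explicitly binary-searches over $x \in [1,n]$, so it makes only $O(\log n)$ adaptive calls to $\mathcal{A}$ and $\uqtest$ to begin with; there is no $\Omega(n)$ sequential dependency to parallelize away, and your recovery of the $O(\log n\log\log n)$-call bound is just a rederivation of the binary-search structure you did not see. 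A second minor point: the reduction itself does not contract vertices or test connectivity, so the appeal to Shiloach--Vishkin is unnecessary---those operations live inside the black box, whose cost is already charged as $W(n,m+n)$ work and $D(n,m+n)$ depth. Neither issue affects the conclusion.
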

\end{enumerate}

\subsection{Our Techniques}

A key tool in our algorithm is built upon the \textit{isolation lemma}. Introduced by Mulmuley, Vazirani, and Vazirani~\cite{MulmuleyVaziraniVazirani87}, the isolation lemma states that for any family of objects built on a ground set $E$, such as perfect matchings of a bipartite graph, if one assigns random polynomially bounded integer weights to each element of $E$, the minimum weight object, i.e., perfect matching, will be unique with high probability. Derandomizing the isolation lemma has since been a major open problem, with a long line of work on it (See e.g.\ \cite{ChariRohatiSrinivasan93, DattaKulkarniRoy10, FennerGurjarThierauf16, SvenssonTarnawski17, AgarwalaMertz25})  \\

\noindent
Goldwasser and Grossman~\cite{GoldwasserGrossman17}, in their pseudodeterministic $\NC$ algorithm for bipartite perfect matching, gave a pseudodeterministic algorithm for assigning edge weights to a bipartite graph such that the minimum weight perfect matching is unique with high probability. Then, using a randomized $\NC$ algorithm for minimum weight perfect matching on the input graph with their constructed edge weights, they obtain a pseudodeterministic $\NC$ algorithm for bipartite perfect matching. Anari and Vazirani~\cite{AnariVazirani19} do the same for non-bipartite matching, and Ghosh et al.~\cite{GhoshGurjarRaj23} for linear matroid intersection.\\

\noindent
In this work, we follow a similar route as the algorithm of~\cite{GoldwasserGrossman17}. Our pseudodeterministic algorithm takes as input a weighted graph, and applies perturbations to the weights in order to make the global minimum cut of the graph unique. We mention that the previous algorithms that follow this approach in~\cite{GoldwasserGrossman17, AnariVazirani19} are slower than their randomized counterparts by a polynomial factor. Thus, ours is the first instance of this approach that preserves computational efficiency.\\

\noindent The implementations of our algorithm in the four models that we discussed (sequential, streaming, cut query, parallel) lose only a logarithmic factor in efficiency in all of those models. Thus, combining our algorithm with the randomized algorithms in ~\cite{GawrychowskiMozesWeimann24, MukhopadhyayNanongkai20, AndersonBlelloch23}, we obtain near optimal pseudodeterministic algorithms for global minimum cut across many models of computation. Moreover, while the isolation lemma has previously found applications in parallel~\cite{MulmuleyVaziraniVazirani87, DattaKulkarniRoy10, FennerGurjarThierauf16, SvenssonTarnawski17}, space-bounded~\cite{ReinhardtAllender00, vanMelkebeekPrakriya19}, catalytic~\cite{AgarwalaMertz25, LiPyneTell24} and sequential~\cite{MulmuleyVaziraniVazirani87} models, this is the first application of isolation that we are aware of in streaming and cut query models.\\

\noindent
We note that our exact approach can also give similar blackbox statements for the more general problem of submodular function minimisation. For succinctness, we do not state this explicitly in the paper.

\subsection{Organization}
The paper is organized as follows. \Cref{sec:prelims} contains basic notation and definitions that we use throughout. \Cref{sec:s-t-mincut} presents a pseudodeterministic algorithm for the minimum $s$-$t$ cut problem and introduces some of the key ideas that we use in this paper. \Cref{sec:global-min-cut} contains our pseudodeterministic global minimum cut algorithm and its analysis. Finally, \Cref{sec:implementation} contains the details of the implementation of our algorithm in the streaming, cut query and parallel models. 


\section{Preliminaries}\label{sec:prelims}

In this section, we introduce some basic notation and definitions. 
Let $G = (V , E)$ be an undirected graph, where $V= \{1, \dots, n\}$.
Let $\emptyset \subsetneq S \subsetneq V$ be a subset of vertices. Any such vertex set is said to be a \emph{cut-set} of the graph. The \emph{cut} formed by $S$ is the set \[\cut_G(S) = \{e \in E \mid e = (u, v), u \in S, v \notin S\}.\]
The cut-sets $S$ and $V \setminus S$ both represent $\cut_G(S)$; thus, we sometimes refer to this cut by $(S, V \setminus S)$. \\

\noindent
Let $w:E \rightarrow \mathbb{Z}^+$ be an edge weight assignment. For a subset $E' \subseteq E$, we use $w\restriction_{E'}$ to denote the restriction of the weight function $w$ to the set $E'$. The \emph{weight} of a cut $(S,V\setminus S)$, denoted $w(\cut_G(S))$, is equal to the sum of the weights of the edges in $\cut_G(S)$. A cut $(S,V\setminus S)$ is a \emph{global minimum cut} of $G$ if its weight equals $\min_{\emptyset \subsetneq S \subsetneq V} w(\cut_G(S))$.  

\begin{fact}
    The function $w(\cut_G(\cdot)): 2^V \to \mathbb{Z}^+$ defined above is a submodular function. That is, for sets $S, T \subseteq V$, it holds that $w(\cut_G(S \cap T)) + w(\cut_G(S \cup T)) \leq w(\cut_G(S)) + w(\cut_G(T))$.
\end{fact}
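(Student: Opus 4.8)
The plan is to reduce submodularity of $w(\cut_G(\cdot))$ to a statement about a single edge via linearity, and then dispatch that statement with a short finite case check.

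First, I would note that for every $X \subseteq V$ we have $w(\cut_G(X)) = \sum_{e \in E} w_e \cdot \Ind{e \in \cut_G(X)}$, directly from the definition of the cut weight. Since the class of submodular functions on $2^V$ is closed under nonnegative linear combinations, it suffices to prove that for each fixed edge $e = (u,v)$ the function $g_e \colon 2^V \to \{0,1\}$ defined by $g_e(X) = \Ind{e \in \cut_G(X)}$ is submodular; multiplying by the positive constant $w_e$ and summing over $e \in E$ then yields the claim.

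Second, I would use that $e = (u,v) \in \cut_G(X)$ holds exactly when $X$ contains precisely one of $u$ and $v$, so $g_e(X)$ depends only on $A_X := X \cap \{u,v\}$, via $g_e(X) = \Ind{|A_X| = 1}$. Because $A_{S \cap T} = A_S \cap A_T$ and $A_{S \cup T} = A_S \cup A_T$, the inequality $g_e(S \cap T) + g_e(S \cup T) \le g_e(S) + g_e(T)$ is equivalent to the assertion that, for all $A, B \subseteq \{u,v\}$,
\[
\Ind{|A \cap B| = 1} + \Ind{|A \cup B| = 1} \;\le\; \Ind{|A| = 1} + \Ind{|B| = 1}.
\]
Up to exchanging $u \leftrightarrow v$ and $A \leftrightarrow B$, the possibilities are: $A = B = \emptyset$; exactly one of $A, B$ empty and the other a singleton; exactly one of $A, B$ empty and the other $\{u,v\}$; $A = B$ a singleton; $A$ and $B$ distinct singletons; $A$ a singleton and $B = \{u,v\}$; and $A = B = \{u,v\}$. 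In each case the left-hand side is at most the right-hand side, the only strict case being $\{A, B\} = \{\{u\}, \{v\}\}$, where the left side is $0$ and the right side is $2$.

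I do not expect a genuine obstacle here. The only point requiring a little care is phrasing the reduction to $X \cap \{u,v\}$ precisely enough that the remaining verification is a genuinely finite check and the symmetry argument legitimately trims it to the few cases listed above.
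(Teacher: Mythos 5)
Your proof is correct. Note, however, that the paper does not actually prove this Fact --- it is stated without proof as a standard, well-known property of the cut function, so there is no paper argument to compare against. Your approach is the textbook one: decompose $w(\cut_G(\cdot))$ as a nonnegative combination of per-edge indicator functions $g_e$, use closure of submodularity under such combinations, observe that $g_e$ factors through $X \mapsto X \cap \{u,v\}$ (with intersections and unions commuting with this restriction), and finish with a finite check on subsets of a two-element set. The case analysis is complete and the sole strict inequality $\{A,B\} = \{\{u\},\{v\}\}$ is correctly identified. One tiny stylistic remark: rather than enumerating seven cases, you could shorten the finite check by noting that the right side equals $2$ precisely when both $A$ and $B$ are singletons, and in the only problematic subcase ($A \neq B$) the left side is $0$; in every other configuration the two sides agree. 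But the argument as written is already sound and self-contained.
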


\noindent
Let $s, t \in V$ be special source and sink vertices. A cut $(S, V\setminus S)$ is an $s$-$t$ cut if $s \in S$ and $t \notin S$. 
An $s$-$t$ cut $(S,V\setminus S)$ of $G$ is a \emph{minimum $s$-$t$ cut} if its weight equals $\min_{\emptyset \subsetneq S \subsetneq V, s \in S, t \notin S} w(\cut_G(S))$.\\

\noindent
We now introduce the concept of stitched weights, which allow us to perturb weights while maintaining integrality. This has been used in many prior works (eg.~\cite{FennerGurjarThierauf16, AnariVazirani19, GoldwasserGrossman17, SvenssonTarnawski17}).

\begin{definition}[Stitched Weights]\label{def:stitching}
Let $G = (V, E)$ be a graph and $w, w':E \rightarrow \mathbb{Z}^+$ be two edge weight assignments. Let $|w'|$ be any value such that $|w'| \geq \sum_{e \in E} w'(e)$. We define the stitching $[w, w']:E \rightarrow \mathbb{Z}^+$ to be an edge weight assignment defined as follows:
\[[w, w'](e) = (|w'| + 1)w(e) + w'(e)\]
\end{definition}

\noindent
For the sake of notational simplicity, while referring to the stitched weight of a subset $A \subseteq E$, we may denote $[w, w'](A)$ by a tuple $(w(A), w'(A))$. 
This notation is useful due to the observation that the ordering of stitched weights behaves exactly like the lexicographic ordering of tuples $(w, w')$.
\begin{observation}
\label{observation:stitching-induces-a-lexicographic-order}
Let $A, B \subseteq E$, then:
\begin{enumerate}
    \item $[w, w'](A) \leq [w, w'](B) \iff w(A) < w(B) \text{ or } \{w(A) = w(B) \text{ and } w'(A) \leq w'(B)\}$, and
    \item $[w, w'](A) = [w, w'](B) \iff w(A) = w(B) \text{ and }w'(A) = w'(B)$.
\end{enumerate}
\end{observation}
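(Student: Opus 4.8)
The plan is to reduce both parts to the elementary fact that representing a nonnegative integer in a fixed radix is order-preserving, with $w'(A)$ playing the role of the least-significant ``digit.'' The first step is to unfold \Cref{def:stitching} over a subset: by linearity, for every $A \subseteq E$ we have $[w,w'](A) = (|w'|+1)\,w(A) + w'(A)$, and likewise for $B$. The second step is to record the crucial bound: since $w'$ takes values in $\mathbb{Z}^+$ and $|w'| \geq \sum_{e \in E} w'(e)$, every subset satisfies $0 \le w'(A) \le |w'| < |w'|+1$, so $w'(A)$ is a legitimate digit in radix $|w'|+1$ while $w(A)$ is the high-order part.

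For part (2), the $\Leftarrow$ direction is immediate from the formula. For $\Rightarrow$, if $[w,w'](A) = [w,w'](B)$ then rearranging gives $(|w'|+1)\bigl(w(A) - w(B)\bigr) = w'(B) - w'(A)$; the right-hand side has absolute value at most $|w'| < |w'|+1$, whereas the left-hand side is an integer multiple of $|w'|+1$, so both sides are $0$, yielding $w(A) = w(B)$ and then $w'(A) = w'(B)$.

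For part (1), I would argue both directions directly. For $\Leftarrow$: if $w(A) < w(B)$ then integrality gives $w(B) - w(A) \ge 1$, so $[w,w'](B) - [w,w'](A) = (|w'|+1)\bigl(w(B) - w(A)\bigr) + \bigl(w'(B) - w'(A)\bigr) \ge (|w'|+1) - |w'| = 1 > 0$; and if $w(A) = w(B)$ with $w'(A) \le w'(B)$, the formula gives $[w,w'](A) \le [w,w'](B)$ at once. For $\Rightarrow$, I would use the contrapositive: the negation of the right-hand side is ``$w(B) < w(A)$, or $w(A) = w(B)$ and $w'(B) < w'(A)$,'' and in the first case the computation above with $A$ and $B$ swapped gives $[w,w'](A) > [w,w'](B)$, while in the second case $[w,w'](A) - [w,w'](B) = w'(A) - w'(B) > 0$; either way $[w,w'](A) \le [w,w'](B)$ fails.

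The only point requiring care is that the first clause of (1) is the strict inequality $w(A) < w(B)$ rather than $w(A) \le w(B)$: this strictness is precisely what makes the radix argument go through, since a unit gap in the high-order part contributes $|w'|+1$, which strictly dominates the at-most-$|w'|$ deficit that can come from the low-order digit. Beyond this, the proof is just the routine integer manipulations sketched above, so I do not anticipate any genuine obstacle.
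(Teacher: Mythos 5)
Your proof is correct, and since the paper states this as an observation without giving a proof, the routine radix/place-value argument you give is exactly the intended justification. The bound $0 \le w'(A) \le |w'|$ for every $A \subseteq E$ is indeed the crux, and your handling of the strict inequality in the high-order coordinate and the contrapositive for the forward direction of part (1) are both sound.
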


\noindent
~\Cref{observation:stitching-induces-a-lexicographic-order} immediately implies the following fact.
\begin{fact}
\label{fact: stitching-preserves-minimality-of-cuts}
Let $G = (V, E)$ be a graph and $w, w':E \rightarrow \mathbb{Z}^+$ be edge weight assignments. Any ($s$-$t$) minimum cut of $G$ under $[w, w']$ is also a ($s$-$t$) minimum cut of $G$ under $w$. 
\end{fact}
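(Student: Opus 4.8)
The plan is to derive this directly from \Cref{observation:stitching-induces-a-lexicographic-order}; no new machinery is needed. I will prove the two versions (global and $s$-$t$) in parallel, since the argument is identical once we fix the relevant family of cut-sets.

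First I would record the key consequence of the Observation: if $A, B \subseteq E$ satisfy $w(A) < w(B)$, then $[w, w'](A) < [w, w'](B)$ strictly. Indeed, part~1 of \Cref{observation:stitching-induces-a-lexicographic-order} gives $[w, w'](A) \leq [w, w'](B)$, and if equality held, then part~2 would force $w(A) = w(B)$, contradicting $w(A) < w(B)$. So strict inequality in $w$-weight lifts to strict inequality in stitched weight.

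Now suppose $(S, V \setminus S)$ is a minimum cut of $G$ under $[w, w']$ but, for contradiction, is \emph{not} a minimum cut of $G$ under $w$. Then there exists a cut-set $T$ with $w(\cut_G(T)) < w(\cut_G(S))$. Applying the observation above with $A = \cut_G(T)$ and $B = \cut_G(S)$ yields $[w, w'](\cut_G(T)) < [w, w'](\cut_G(S))$, contradicting the minimality of $(S, V \setminus S)$ under $[w, w']$. For the $s$-$t$ version, one runs the same argument but restricts attention to cut-sets $T$ with $s \in T$, $t \notin T$; since $(S, V\setminus S)$ is itself an $s$-$t$ cut, a cheaper-under-$w$ competitor $T$ can be taken to be an $s$-$t$ cut as well, and the contradiction goes through verbatim.

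There is no real obstacle here: the statement is essentially a restatement of the fact that the lexicographic refinement $[w,w']$ only breaks ties of $w$ and never reverses a strict $w$-comparison. The only point requiring a moment's care is precisely the lifting of strict inequality, which is why both parts of \Cref{observation:stitching-induces-a-lexicographic-order} are invoked rather than part~1 alone.
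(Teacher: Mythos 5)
Your proposal is correct and matches the paper's intent: the paper simply states that \Cref{observation:stitching-induces-a-lexicographic-order} ``immediately implies'' this fact, and you have supplied exactly the one-line lifting argument (strict $w$-inequality lifts to strict $[w,w']$-inequality, via parts~1 and~2 of the observation together) that makes that implication explicit. Your handling of the $s$-$t$ case by restricting the class of competitor cut-sets is the natural reading and agrees with the paper.
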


\begin{definition}[$\es$]
Let $V$ be the vertex set of a graph and $s \in V$ be a source vertex. The edge set $\es$ is defined to be the set of edges corresponding to the star graph on the vertex set $V$ with the vertex $s$ as the center. That is, 
\(\es = \{(s, v) \ | \ v \in V, v \neq s\}.\)
\end{definition}

\begin{definition}[$w_{s}$]
Let $G = (V, E)$ be a graph and $s \in V$ be a source vertex. The edge weight assignment $\wst:E \rightarrow \mathbb{Z}^+$ is defined by $e \mapsto \Ind{e \in \es}$. That is, 
\[ \wst(e) =  \begin{cases} 
       1, & \text{if } e \in \es \\
       0, & \text{otherwise.} \\
        \end{cases}
\]
\end{definition}

\noindent
Note that $n \geq \sum_{e \in E} \wst(e)$. Thus, we may set $|\wst|$ to $n$ when stitching $\wst$ (see~\Cref{def:stitching}).\\ 

\section{\texorpdfstring{Warm-up: Pseudodeterministic Minimum $s$-$t$ Cut}{Warm-up: Pseudodeterministic Minimum s-t Cut}}\label{sec:s-t-mincut}

In this section, we present our pseudodeterministic algorithm for the minimum $s$-$t$ cut problem and prove the following theorem. Specifically, we provide a generic transformation from randomized algorithms to pseudodeterministic algorithms.

\begin{theorem}
\label{theorem:pseudodeterministic-st-cut-algorithm-is-correct}
Let $\mathcal{A}$ be a randomized algorithm that:
\begin{enumerate}
    \item Takes as input a simple graph $G = (V, E)$ along with positive integer edge weights $w:E \rightarrow \mathbb{Z}^+$, and
    \item Outputs an $s$-$t$ cut $(S, T)$ of $G$ such that
    \[\Pr[(S,T) \text{ is a minimum {$s$-$t$} cut of }G \text{ under edge weights } w] \geq \rho\]
\end{enumerate}

\noindent
Then, the algorithm $\psdst$ is a pseudodeterministic algorithm which:
\begin{enumerate}
    \item Takes as input a simple graph $G = (V, E)$ along with positive integer edge weights $w:E \rightarrow \mathbb{Z}^+$, and
    \item There exists a fixed minimum $s$-$t$ cut $(S^*, T^*)$ of $G$ satisfying $\Pr[\psdst(G,w) = (S^*, T^*)] \geq \rho$.
\end{enumerate}
\end{theorem}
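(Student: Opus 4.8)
The plan is to exploit the tie-breaking weight $\wst$ together with the lattice structure of minimum $s$-$t$ cuts, rather than a random isolation step. First I would have $\psdst$ augment the input graph: set $G' = (V, E \cup \es)$ and extend $w$ to $w_0 : E \cup \es \to \mathbb{Z}_{\ge 0}$ by $w_0(e) = w(e)$ for $e \in E$ and $w_0(e) = 0$ for the newly added star edges. Since the added edges carry zero $w_0$-weight, $w_0(\cut_{G'}(S)) = w(\cut_G(S))$ for every cut-set $S$, so the minimum $s$-$t$ cuts of $(G', w_0)$ and of $(G, w)$ are exactly the same vertex partitions. Now form the stitched weight $[w_0, \wst]$ with $|\wst| = n$ (legal since $\sum_e \wst(e) \le n$). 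Although $w_0$ has zeros, the function $[w_0,\wst]$ is \emph{positive} integer-valued on $E \cup \es$: it equals $1$ on a newly added star edge and is at least $n+1$ on every edge of $E$, so it is a valid input for $\mathcal{A}$ (and $G'$ is still simple). The algorithm $\psdst(G,w)$ simply runs $\mathcal{A}(G', [w_0, \wst])$ and returns its output, which is a partition of $V$ and hence directly a candidate $s$-$t$ cut of $G$.

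Next I would pin down the cut this forces $\mathcal{A}$ to produce. By \Cref{fact: stitching-preserves-minimality-of-cuts}, any minimum $s$-$t$ cut of $(G', [w_0,\wst])$ is a minimum $s$-$t$ cut of $(G', w_0)$, hence of $(G,w)$. Among those, \Cref{observation:stitching-induces-a-lexicographic-order} says $[w_0,\wst]$ breaks ties by $\wst$-weight: a minimum $s$-$t$ cut $(S, V \setminus S)$ of $(G,w)$ is minimum under $[w_0,\wst]$ iff it additionally minimizes $\wst(\cut_{G'}(S))$. Because \emph{all} star edges are present in $G'$, we have $\wst(\cut_{G'}(S)) = |\{v \neq s : v \notin S\}| = n - |S|$, so the minimum $s$-$t$ cuts under $[w_0,\wst]$ are precisely the minimum $s$-$t$ cuts of $(G,w)$ with $|S|$ as large as possible.

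The key step is that this maximizer is unique, and here I would invoke submodularity of $w(\cut_G(\cdot))$. If $S_1$ and $S_2$ are source sides of minimum $s$-$t$ cuts, then $S_1 \cap S_2$ and $S_1 \cup S_2$ are again $s$-$t$ cut-sets (both contain $s$, neither contains $t$), so submodularity gives $w(\cut_G(S_1 \cup S_2)) + w(\cut_G(S_1 \cap S_2)) \le w(\cut_G(S_1)) + w(\cut_G(S_2))$; since each term on the left is at least the minimum $s$-$t$ cut value and the right side equals twice that value, $S_1 \cup S_2$ is also a minimum $s$-$t$ cut. Hence the family of minimum $s$-$t$ cut-sets is closed under union, so it has a unique maximal element $S^*$; put $T^* = V \setminus S^*$. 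Combining with the previous paragraph, $(S^*, T^*)$ is the \emph{unique} minimum $s$-$t$ cut of $(G', [w_0,\wst])$.

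Finally, since $\mathcal{A}$ outputs a minimum $s$-$t$ cut of $(G', [w_0,\wst])$ with probability at least $\rho$, and the only such cut is $(S^*,T^*)$, we conclude $\Pr[\psdst(G,w) = (S^*,T^*)] \ge \rho$ with $(S^*,T^*)$ a minimum $s$-$t$ cut of $(G,w)$, as required. I expect the main obstacle to be the uniqueness argument in the third paragraph: one must both justify that adding the full star makes $\wst(\cut_{G'}(S))$ an exact affine function of $|S|$ (so the tie-breaker orders candidates by $|S|$ with no further ambiguity) and invoke the submodular-lattice structure of minimum $s$-$t$ cuts to get a single maximal one. The remaining pieces — that stitching preserves positivity and that augmenting by zero-weight star edges leaves the minimum $s$-$t$ cuts unchanged — are routine bookkeeping.
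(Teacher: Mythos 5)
Your proposal is correct and takes essentially the same approach as the paper: augment $G$ with the star $\es$, stitch $\wst$ onto $w$ to break ties lexicographically, and use submodularity to show the resulting minimum $s$-$t$ cut is unique. Your framing via the identity $\wst(\cut_{G'}(S)) = n - |S|$ together with closure of minimum-cut source sides under union makes the canonical cut explicit (the source-maximal minimum $s$-$t$ cut), whereas the paper's Claim~\ref{unique-st-min-cut} establishes the same uniqueness by a direct contradiction using the identical submodularity step via Fact~\ref{fact: main-property-of-submodular-functions}.
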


\begin{algorithm}[H]
\begin{algorithmic}[1]
\caption{\textsc{Pseudodeterministic Algorithm $\psdst$ for Minimum $s$-$t$ Cut}}
    \Require $G = (V, E)$ be the input graph with edge weights $w:E \rightarrow \mathbb{Z}^+$; oracle access to randomised algorithm $\mathcal{A}$ for weighted minimum $s$-$t$ cut
    \Ensure Minimum $s$-$t$ cut $(S,T)$

    \State Extend $w$ to the domain $E \cup \es$ by $w(e) = 0$ for all $e \in \es$.
    \State Run $\mathcal{A}$ on $G' = (V, E \cup \es)$ with edge weights $[w, \wst]$. Let the cut returned be $(S, T)$. 
    \State Output $(S, T)$.
    \end{algorithmic}
\end{algorithm}

\noindent
The following claim simply observes that adding weight $0$ edges to a graph does not change the value of any cuts.

\begin{claim}
\label{claim: adding-0-weight-edges-is-ok}
Let $G' = (V, E \cup \es)$ be a graph with edge weights $w:E \cup \es \rightarrow \mathbb{Z}^+$ such that $w(e) = 0 \ \forall e \notin E$. Then $(S, T)$ is a minimum $s$-$t$ cut of $G'$ if and only if $(S, T)$ is a minimum $s$-$t$ cut of $G = (V, E)$ with edge weights $w\restriction_E$. 
\end{claim}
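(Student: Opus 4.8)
The plan is to show that $G$ and $G'$ define literally the same optimization problem over the same feasible set, so that their minimizers coincide.

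First I would observe that $G$ and $G'$ have the same vertex set $V$, and an $s$-$t$ cut is specified by a set $S$ with $s \in S$ and $t \notin S$; hence $G$ and $G'$ have exactly the same collection of $s$-$t$ cuts. Next, fix any such cut $(S, V \setminus S)$ and compare its weight in the two graphs. Writing $E \cup \es = E \sqcup (\es \setminus E)$, we have $\cut_{G'}(S) = \cut_G(S) \,\sqcup\, \big(\cut_{G'}(S) \cap (\es \setminus E)\big)$, where the part lying in $E$ is exactly $\cut_G(S)$ and carries the same weights under $w$ as under $w\restriction_E$. By hypothesis $w$ vanishes on every edge outside $E$, so the second part contributes $0$ to the total weight. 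Therefore $w(\cut_{G'}(S)) = w\restriction_E(\cut_G(S))$ for every $s$-$t$ cut $(S, V \setminus S)$; that is, the two objective functions agree pointwise on the common feasible set.

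Finally, since both the feasible sets and the objective functions coincide, the minimum values coincide, and a cut attains the minimum in $G'$ if and only if it attains the minimum in $G$, which is precisely the statement of the claim. The only point that needs a little care is that $\es$ may share edges with $E$, so the decomposition above should be taken with respect to $\es \setminus E$ rather than $\es$; the edges in $\es \cap E$ simply retain their original weight in both graphs and cause no change. I do not expect any genuine obstacle here — the claim is essentially just bookkeeping.
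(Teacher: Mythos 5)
Your proof is correct and matches the paper's argument: both decompose $\cut_{G'}(S)$ into the edges lying in $E$ (which form $\cut_G(S)$ and carry identical weights) and the edges in $\es\setminus E$ (which contribute zero by hypothesis), concluding that the weight functions agree pointwise on the common set of $s$-$t$ cuts. No further comment needed.
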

\begin{proof}
Simply note that for any cut $(S, T)$ of $G'$, \[w(\cut_{G'}(S)) = \sum_{e \in \cut_{G'}(S) \cap E} w(e) + \sum_{e \in \cut_{G'}(S) \setminus E} w(e) = \sum_{e \in \cut_{G'}(S) \cap E} w(e) = \sum_{e \in \cut_G(S)} w\restriction_E(e)\]
\[ = w\restriction_{E}(\cut_G(S))\]

\noindent
Thus, because each cut in $G'$ with edge weights $w$ have exactly the same weight as the corresponding cut in $G$ with edge weights $w\restriction_E$, the minimum cuts of $G'$ with weights $w$ and $G$ with weights $w\restriction_E$ are exactly the same. 
\end{proof}

\begin{fact}
\label{fact: main-property-of-submodular-functions}
Let $G = (V, E)$ be a graph with edge weights $w:E \rightarrow \mathbb{Z}^+$ and a source vertex $s$ and sink vertex $t$. Let $S \subseteq V$ and $S' \subseteq V$ be cut-sets that both correspond to minimum $s$-$t$ cuts in $G$ under edge weights $w$, with $s \in S' \cap S$. Then $S \cup S'$ is also a cut-set corresponding to a minimum $s$-$t$ cut in $G$ under edge weights $w$.
\end{fact}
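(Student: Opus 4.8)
The plan is to deduce this directly from submodularity of the cut function, which was recorded as a \textbf{Fact} earlier in the preliminaries. Write $\lambda = \min_{\emptyset \subsetneq U \subsetneq V,\ s \in U,\ t \notin U} w(\cut_G(U))$ for the minimum $s$-$t$ cut value, so that by hypothesis $w(\cut_G(S)) = w(\cut_G(S')) = \lambda$.

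First I would check that $S \cap S'$ and $S \cup S'$ are both legitimate $s$-$t$ cut-sets. Since $s \in S \cap S'$, both sets contain $s$ and are therefore nonempty; since $t \notin S$ and $t \notin S'$, we have $t \notin S \cup S' \supseteq S \cap S'$, so both sets are proper subsets of $V$ and place $t$ on the opposite side from $s$. Hence both $w(\cut_G(S \cap S'))$ and $w(\cut_G(S \cup S'))$ are weights of $s$-$t$ cuts, so each is at least $\lambda$ by definition of $\lambda$.

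Next I would invoke submodularity: $w(\cut_G(S \cap S')) + w(\cut_G(S \cup S')) \leq w(\cut_G(S)) + w(\cut_G(S')) = 2\lambda$. Combining with the two lower bounds $w(\cut_G(S \cap S')) \geq \lambda$ and $w(\cut_G(S \cup S')) \geq \lambda$ from the previous step forces $w(\cut_G(S \cap S')) = w(\cut_G(S \cup S')) = \lambda$. In particular $S \cup S'$ attains the minimum $s$-$t$ cut value and is a valid $s$-$t$ cut-set, so $(S \cup S', V \setminus (S \cup S'))$ is a minimum $s$-$t$ cut, as claimed.

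There is essentially no hard step here; the only thing to be careful about is the bookkeeping that $S \cap S'$ and $S \cup S'$ genuinely satisfy the constraints ($\emptyset \subsetneq \cdot \subsetneq V$, with $s$ inside and $t$ outside) so that the lower bound $\lambda$ applies to both of them — this is exactly where the assumption $s \in S \cap S'$ is used. (The same argument of course also gives that $S \cap S'$ is a minimum $s$-$t$ cut, though only the statement about $S \cup S'$ is needed.)
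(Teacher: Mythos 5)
Your proof is correct and follows essentially the same route as the paper: apply submodularity of the cut function to $S, S'$ and observe that since both $w(\cut_G(S\cap S'))$ and $w(\cut_G(S\cup S'))$ are at least the minimum while their sum is at most twice the minimum, both must equal it. You are slightly more careful than the paper in verifying that $S\cap S'$ and $S\cup S'$ are genuinely valid $s$-$t$ cut-sets (nonempty, proper, $s$ inside, $t$ outside), which the paper leaves implicit.
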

\begin{proof}
    It is a simple fact of submodular functions that $w(\cut_G(S \cap S')) + w(\cut_G(S \cup S')) \leq w(\cut_G(S)) + w(\cut_G(S'))$. Since $S$ and $S'$ are both minimisers of $w(\cut_G(S))$, both terms on the left must also be minimisers. Thus, $S \cup S'$ is also a cut-set corresponding to a minimum cut of $G$ under edge weights $w$.
\end{proof}

\begin{claim}
\label{unique-st-min-cut}
Let $G' = (V, E \cup \es)$ be a graph with edge weights $w:E \cup \es \rightarrow \mathbb{Z}^+$. Let $s \in V$ be a fixed source vertex. For any $t \in V \setminus \{s\}$, there exists a unique minimum $s$-$t$ cut in $G'$ with edge weights $[w, \wst]$.
\end{claim}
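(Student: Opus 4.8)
The plan is to reduce, via the lexicographic structure of stitched weights, the question of uniqueness under $[w,\wst]$ to a statement about a canonical \emph{maximal} minimum $s$-$t$ cut-set under $w$, which is then supplied by submodularity (\Cref{fact: main-property-of-submodular-functions}). Throughout, I represent an $s$-$t$ cut by the cut-set $S$ with $s \in S$ and $t \notin S$, so that cuts correspond bijectively to such sets $S$.

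First I would argue that the $\wst$-weight of a cut measures the size of the $t$-side. Concretely, for $S$ with $s \in S$, the edges of $\es$ crossing $S$ are exactly the pairs $(s,v)$ with $v \notin S$, so $\wst(\cut_{G'}(S)) = |V \setminus S|$. Hence among cut-sets, minimizing $\wst$-weight is the same as maximizing $|S|$. Next, by \Cref{fact: stitching-preserves-minimality-of-cuts}, every minimum $s$-$t$ cut of $G'$ under $[w,\wst]$ is a minimum $s$-$t$ cut of $G'$ under $w$; and by \Cref{observation:stitching-induces-a-lexicographic-order} (item 1), among the minimum $s$-$t$ cuts of $G'$ under $w$, the ones minimal under $[w,\wst]$ are precisely those of smallest $\wst$-weight, i.e.\ those cut-sets $S$ of largest cardinality. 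So it suffices to show there is a \emph{unique} cut-set $S$ of maximum cardinality among minimum $s$-$t$ cut-sets of $G'$ under $w$.

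For this, let $S_1, S_2$ be two minimum $s$-$t$ cut-sets under $w$ of maximum cardinality. They both satisfy $s \in S_i$ and $t \notin S_i$, so $s \in S_1 \cap S_2$, and by \Cref{fact: main-property-of-submodular-functions} the set $S_1 \cup S_2$ is again a minimum $s$-$t$ cut-set under $w$ (note $t \notin S_1 \cup S_2$ since $t \notin S_1$ and $t \notin S_2$, so it is genuinely an $s$-$t$ cut-set). But $|S_1 \cup S_2| \ge |S_1|$, and since $|S_1|$ is maximum we get $|S_1 \cup S_2| = |S_1| = |S_2|$, forcing $S_1 = S_2$. Therefore the maximum-cardinality minimum $s$-$t$ cut-set is unique; call it $S^*$. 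Combining with the previous paragraph, $(S^*, V \setminus S^*)$ is the unique minimum $s$-$t$ cut of $G'$ under $[w,\wst]$. (That the family of minimum $s$-$t$ cut-sets is nonempty, so that $S^*$ exists, is immediate since $|V| \ge 2$.)

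The only real subtlety — and the step I would be most careful about — is the last one: \Cref{fact: main-property-of-submodular-functions} gives \emph{existence} of a union-closed family, and one must make sure this yields \emph{uniqueness} of the cardinality-maximizer rather than merely existence of some maximizer; the cardinality argument above handles exactly this. A minor bookkeeping point is the orientation convention (which side contains $s$): fixing $s \in S$ throughout is what makes $\wst(\cut_{G'}(S)) = |V \setminus S|$ and keeps the union argument inside the $s$-$t$ cut family.
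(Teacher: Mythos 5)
Your proposal is correct and uses essentially the same approach as the paper: both rely on \Cref{fact: stitching-preserves-minimality-of-cuts}, the submodularity-based union closure (\Cref{fact: main-property-of-submodular-functions}), and the observation that $\wst$-weight counts the $t$-side so that two distinct minimizers would have a strictly better union. The paper phrases it as a direct contradiction (taking two distinct minimum cuts under $[w,\wst]$ and exhibiting a strictly lighter cut via intersection of $T$-sides), while you phrase it as characterizing the stitched minimizers as maximum-cardinality cut-sets and showing that maximizer is unique; these are the same argument, since $T_1 \cap T_2 = V \setminus (S_1 \cup S_2)$.
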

\begin{proof}
Fix some $t \in V \setminus \{s\}$.
Assume for the sake of contradiction that $G'$ under edge weights $[w, \wst]$ admits multiple minimum weight $s$-$t$ cuts. Let these cuts be $(S_1, T_1)$ and $(S_2, T_2)$. Assume that they have weight $(W, W')$. Note that $W' = |T_1| = |T_2|$. Consider now the cut $(S_1 \cup S_2, T_1 \cap T_2)$.

By \Cref{fact: stitching-preserves-minimality-of-cuts}, we know that $(S_1,T_1)$ and $(S_2,T_2)$ are both also minimum cuts of $G$ under edge weights $w$. This implies, by \Cref{fact: main-property-of-submodular-functions}, that $(S_1 \cup S_2, T_1 \cap T_2)$ is also a minimum cut of $G$ under edge weights $w$. Thus, \[w(\cut_G(S_1 \cup S_2)) = W.\] 

\noindent
Moreover, \[\wst(\cut_G(S_1 \cup S_2)) = |T_1 \cap T_2| < |T_1| = |T_2| = \wst(\cut_G(S_1)) = \wst(\cut_G(S_2)).\]

\noindent
Thus, we have
\[[w, \wst](\cut_G(S_1 \cup S_2)) < [w, \wst](\cut_G(S_1)) = [w, \wst](\cut_G(S_2))\]

\noindent
This contradicts the minimality of $(S_1, T_1)$ and $(S_2, T_2)$. This completes the proof.
\end{proof}

\begin{proof}[Proof of~\cref{theorem:pseudodeterministic-st-cut-algorithm-is-correct}]
Consider $G' = (V, E \cup \es)$ with edge weights $[w, \wst]$. By \Cref{unique-st-min-cut}, $G'$ with weights $[w, \wst]$ has a unique minimum $s$-$t$ cut. Let it be $(S^*, T^*)$. By \Cref{claim: adding-0-weight-edges-is-ok} and \Cref{fact: stitching-preserves-minimality-of-cuts}, $(S^*, T^*)$ is a minimum cut of $G$ with respect to weights $w$.\\

\noindent
Now, simply note that $\psdst(G, w) = \mathcal{A}(G', [w, \wst])$. With probability $\geq \rho$, the output $\mathcal{A}(G', [w, \wst])$ is a minimum cut of $G'$ under weights $[w, \wst]$. The only such minimum cut is $(S^*, T^*)$. Thus, with probability $\geq \rho$, the output $\psdst(G, w) = (S^*, T^*)$. This completes the proof. 
\end{proof}

\begin{remark}
As far as we are aware, the most efficient minimum $s$-$t$ cut algorithms in all models use $s$-$t$ max flow algorithms. Given an $s$-$t$ max flow, it is easy to obtain a canonical minimum $s$-$t$ cut by simply finding the cut-set $S$ of vertices reachable from $s$ in the residual graph formed by the maximum flow. Thus, we instead focus on global minimum cut, where much more efficient algorithms than max flow are known. 
\end{remark}

\section{Pseudodeterministic Global Minimum Cut}\label{sec:global-min-cut}

In this section, we design pseudodeterministic algorithms for the global minimum cut problem. Specifically, we prove the following. 

\begin{theorem}
\label{theorem:pseudodeterministic-global-cut-algorithm-is-correct}
Let $\mathcal{A}$ be a randomised algorithm which:
\begin{enumerate}
    \item Takes as input a simple graph $G = (V, E)$ along with positive integer edge weights $w:E \rightarrow \mathbb{Z}^+$, and
    \item Outputs a cut $(S, T)$ of $G$ such that
    \[\Pr[S \text{ is a minimum cut of }G \text{ under edge weights } w] \geq \rho\]
\end{enumerate}

\noindent
Then, the algorithm $\psdgb$ is a pseudodeterministic algorithm which:
\begin{enumerate}
    \item Takes as input a simple graph $G = (V, E)$ along with positive integer edge weights $w:E \rightarrow \mathbb{Z}^+$, and
    \item There exists a fixed minimum cut $(S^*, T^*)$ of $G$ such that \[\Pr[\psdgb(G,w) = (S^*,T^*)] \geq 1 - (1 - \rho) \cdot O(\log n).\]
    
\end{enumerate}
\end{theorem}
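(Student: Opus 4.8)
The plan is to mimic the $s$-$t$ construction of \Cref{sec:s-t-mincut}: fix a source vertex $s \in V$, form $G' = (V, E \cup \es)$ extending $w$ by $0$ on the newly added edges, and perturb with the star weight $\wst$. Unlike the $s$-$t$ case, $(G', [w, \wst])$ need \emph{not} have a unique minimum cut (on a cycle every singleton is a minimum cut-set of the same size). The structural lemma I would prove instead is: writing every minimum cut of $(G', [w,\wst])$ with $s$ on the side $S$, all such cuts have $|\overline S|$ equal to a common value $W'$ (the second coordinate of the minimum stitched value), and the non-$s$ sides $\overline S$ are \emph{pairwise disjoint} subsets of $V \setminus \{s\}$. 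The proof copies that of \Cref{unique-st-min-cut}: if $S_1 \neq S_2$ are both minimizers with $\overline{S_1} \cap \overline{S_2} \neq \emptyset$, pick $t$ in this intersection; each $S_i$ is then a minimum $s$-$t$ cut of $G$ under $w$ (a global minimum cut that separates $s$ and $t$ is a minimum $s$-$t$ cut), so by \Cref{fact: main-property-of-submodular-functions} $S_1 \cup S_2$ is a minimum $s$-$t$ cut and hence $w(\cut_G(S_1 \cup S_2)) = W$; but $\wst(\cut_G(S_1 \cup S_2)) = |\overline{S_1} \cap \overline{S_2}| < W'$ because $\overline{S_1} \neq \overline{S_2}$ have equal size, contradicting minimality of $S_1$ under $[w,\wst]$.

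Let $\mathcal{M}$ be the set of minimizing cut-sets (with $s \in S$) of $(G', [w, \wst])$, and define the canonical cut $(S^\star, T^\star)$ to be the unique member of $\mathcal{M}$ whose non-$s$ side contains $j^\star := \min \bigcup_{S \in \mathcal{M}} \overline S$; this is well defined by pairwise disjointness. I would then define $\psdgb$ to: (i) run $\mathcal{A}$ on $(G', [w, \wst])$ and read $(W, W')$ off the stitched value of the returned cut; (ii) binary-search for $j^\star$ over $V \setminus \{s\}$, where for a threshold $m$ one runs $\mathcal{A}$ on $(G', [w, \wst, g_m])$ with $g_m(e) = \Ind{e = (s,v) \text{ for some } v > m}$ — since $g_m$ only refines the lexicographic order, the minimizers still have $(w,\wst)$-value $(W, W')$, and the minimum $g_m$-value over $\mathcal{M}$ equals $W' - \max_{S \in \mathcal{M}} |\overline S \cap \{v : v \le m\}|$, which is $< W'$ exactly when some member of $\mathcal{M}$ meets $\{v \le m\}$, i.e.\ exactly when $j^\star \le m$, so reading the $g_m$-coordinate of $\mathcal{A}$'s output decides the comparison and $O(\log n)$ such calls determine $j^\star$; (iii) make a final call on $(G', [w, \wst, \hat g])$ with $\hat g(e) = \Ind{e = (s,v) \text{ for some } v \neq j^\star}$, where every minimizer lies in $\mathcal{M}$ and over $\mathcal{M}$ the $\hat g$-value is $W' - \Ind{j^\star \in \overline S}$, minimized only by the member containing $j^\star$ — so this instance has a \emph{unique} minimum cut, namely $(S^\star, T^\star)$, which by \Cref{claim: adding-0-weight-edges-is-ok} and \Cref{fact: stitching-preserves-minimality-of-cuts} is also a minimum cut of $G$ under $w$. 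The output is $\mathcal{A}$'s answer on this last instance.

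For the analysis: every stitched weight is a positive integer bounded by $\poly(n, \max_e w(e))$ and $G'$ is simple, so each of the $O(\log n)$ instances is a legal input to $\mathcal{A}$. Condition on the event $\mathcal{E}$ that all of these calls return a genuine minimum cut of their respective instance; by a union bound $\Pr[\mathcal{E}] \ge 1 - (1-\rho) \cdot O(\log n)$. On $\mathcal{E}$, steps (i) and (ii) use only the minimum stitched \emph{value} of each instance — which is well defined even when the minimizer is not — so they recover the true $(W,W')$ and the true $j^\star$; and on $\mathcal{E}$ the last call returns the unique minimum cut of its instance, which is $(S^\star, T^\star)$. Hence $\Pr[\psdgb(G,w) = (S^\star, T^\star)] \ge \Pr[\mathcal{E}] \ge 1 - (1-\rho)\cdot O(\log n)$, and the running time is $O(\log n)$ invocations of $\mathcal{A}$ on $n$-vertex, $(m+n)$-edge graphs.

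The main obstacle is exactly the non-uniqueness of the star-perturbed minimum cut, which blocks a direct single-call reduction. Two points make the plan work and deserve emphasis: the submodular uplift showing the minimizers have pairwise disjoint non-$s$ sides (so the only residual ambiguity is \emph{which disjoint block} is special — a one-dimensional quantity amenable to binary search), and the observation that the binary-search calls need only the minimum stitched \emph{value} of their instance, not a canonical minimizer, so they tolerate $\mathcal{A}$'s freedom; only the single final instance is engineered to have a unique minimum cut. I would also check the corner cases (the search runs over a nonempty set since $\overline S \neq \emptyset$; $g_m$ and $\hat g$ stay nonnegative and the nested stitch stays positive and polynomially bounded, so the $t(n, m+n)$ factor is preserved up to the $O(\log n)$), which together with an $O(\log\log n)$-fold amplification of each call — driving per-call failure below $1/\poly\log n$ so the union bound closes with $\rho = 2/3$ — yields the bound of \Cref{thm:sequential}.
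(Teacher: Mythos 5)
Your proof is correct and tracks the paper's high-level strategy — star perturbation, stitched weights inducing a lexicographic order, pairwise-disjointness of the non-$s$ sides via submodularity (your direct union argument is the same calculation as the paper's \Cref{lemma:cuts-are-disjoint-and-equal-sized}, which routes through \Cref{unique-st-min-cut}), and then a binary search over vertices to isolate a unique minimizer. The cosmetic differences (targeting $j^\star = \min \bigcup_i T_i$ rather than $t_m = \max \bigcup_i T_i$, and using $g_m$ counting $v > m$ rather than $\wi{x}$ counting $v < x$) are immaterial.

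Where you genuinely deviate, and simplify, is in the stopping rule. The paper's \Cref{alg:global-min-cut} decides whether to terminate at each binary-search step by running a separate \emph{uniqueness test} ($\uqtest$, \Cref{alg:uniqueness-test}), which needs its own construction of two auxiliary star perturbations and its own correctness argument (\Cref{theorem:uq-test-is-correct}); this is needed because, per \Cref{binary-search}, the value condition $\wi{x}(\cut_{G'}(S)) = |T| - 1$ alone does not distinguish the terminal case from a non-unique interior case. Your key observation sidesteps this: the binary-search calls need only the minimum stitched \emph{value} of each instance, which is well defined and identical across all minimum cuts regardless of which one $\mathcal{A}$ returns. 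So you read the $g_m$-coordinate from any returned minimizer, compare it against $W'$, and unambiguously decide $j^\star \le m$ versus $j^\star > m$ without ever needing to know whether the instance's minimizer was unique. You determine $j^\star$ exactly, then make one final engineered call with $\hat g = \Ind{e = (s,v), v \neq j^\star}$ whose minimizer is provably unique by disjointness. This eliminates $\uqtest$ and its proof entirely. The paper's approach does have one advantage it does not really exercise: $\uqtest$ is of independent interest and could terminate the search early (e.g.\ step \ref{step:unique-cut-return} handles $k=1$ without any search), but your version handles $k=1$ uniformly as well, and both have $O(\log n)$ total calls, so nothing is lost asymptotically. Your union-bound accounting and the discussion of weight integrality, positivity, polynomial boundedness, and the $O(\log\log n)$ amplification matching \Cref{thm:sequential} are all correct.
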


\noindent
The structure of global minimum cuts, for our purpose, is slightly more complicated than minimum $s$-$t$ cuts. The following example of a simple path illustrates why our approach for minimum $s$-$t$ cut does not work immediately.

\begin{center}
    \begin{tikzpicture}[scale=1]
  \node (u) at (0,0) [circle, draw, inner sep=1.5pt] {$u$};
  \node (s) at (2.5,0) [circle, draw, inner sep=1.5pt] {$s$};
  \node (v) at (5,0) [circle, draw, inner sep=1.5pt] {$v$};

  \draw (u) -- node[above] {$1$} (s)
        (s) -- node[above] {$1$} (v);
\end{tikzpicture}
\end{center}

\noindent
Simply note that both $\{s, v\}$ and $\{s, u\}$ are global minimum cuts of the transformation $G' = (V, E \cup \es)$ under edge weights $[w, \wst]$. \\

\noindent
The following lemma sheds light on the structure of these counterexamples.

\begin{lemma}
\label{lemma:cuts-are-disjoint-and-equal-sized}
Let $G = (V, E)$ be a simple graph with edge weights $w:E \rightarrow \mathbb{Z}^+$. Let $s \in V$ be a special source vertex. Consider the transformed graph $G' = (V, E \cup \es)$ with edge weights $[w, \wst]$.
Let $\{(S_1, T_1),\dots, (S_k, T_k)\}$ be the global minimum cuts of $G'$ under edge weights $[w, \wst]$ such that $s \in S_i \ \forall i \in [k]$. Then:
\begin{enumerate}
    \item For all $i,j \in [k]$ such that $i \neq j$, we have $T_i \cap T_j = \emptyset$.
    \item For all $i,j \in [k]$, we have $|T_i| = |T_j|$. 
\end{enumerate}
\end{lemma}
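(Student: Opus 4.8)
The plan is to establish the two claims in sequence, using submodularity of the cut function together with the lexicographic behaviour of stitched weights recorded in \Cref{observation:stitching-induces-a-lexicographic-order}.

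\textbf{Part 2 (equal sizes).} I would first observe that, by definition, every cut in the list $\{(S_1,T_1),\dots,(S_k,T_k)\}$ has the same $[w,\wst]$-weight, which I will write as the tuple $(W, W')$. Since the star edges are exactly $\{(s,v): v\in V, v\neq s\}$, each of $\wst$-weight $1$, and since $s\in S_i$, the star edges crossing $(S_i,T_i)$ are precisely those with endpoint in $T_i$; hence $\wst(\cut_{G'}(S_i)) = |T_i|$. By the equality clause of \Cref{observation:stitching-induces-a-lexicographic-order}, two cuts of equal stitched weight agree in both coordinates, so $|T_i| = W' = |T_j|$ for all $i,j$. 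This settles part 2 and also fixes the common value $W' = |T_i|$ for later use.

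\textbf{Part 1 (disjointness).} Here I would argue by contradiction: suppose $T_i\cap T_j\neq\emptyset$ for some $i\neq j$. Then $S_i\cup S_j\subsetneq V$, while $s\in S_i\cap S_j\subseteq S_i\subsetneq V$, so both $S_i\cup S_j$ and $S_i\cap S_j$ are honest cut-sets of $G'$. By \Cref{claim: adding-0-weight-edges-is-ok} and \Cref{fact: stitching-preserves-minimality-of-cuts}, the quantity $W$ equals the minimum $w$-weight of any cut of $G$, and $(S_i,T_i),(S_j,T_j)$ are minimum $w$-cuts. Submodularity of $w(\cut_G(\cdot))$ then gives
\[ w(\cut_G(S_i\cap S_j)) + w(\cut_G(S_i\cup S_j)) \;\le\; w(\cut_G(S_i)) + w(\cut_G(S_j)) \;=\; 2W, \]
and since each left-hand summand is at least $W$, both equal $W$; in particular $S_i\cup S_j$ is a minimum $w$-cut. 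Computing its second coordinate, the star edges crossing $(S_i\cup S_j,\, T_i\cap T_j)$ are exactly those into $T_i\cap T_j$, so the stitched weight of this cut is $(W,\, |T_i\cap T_j|)$. Minimality of $(W,W')$ among all cuts of $G'$, via \Cref{observation:stitching-induces-a-lexicographic-order}, forces $|T_i\cap T_j|\ge W'$; but $T_i\cap T_j\subseteq T_i$ with $|T_i| = W'$ forces $|T_i\cap T_j|\le W'$. Hence $|T_i\cap T_j| = W' = |T_i| = |T_j|$, so $T_i = T_i\cap T_j = T_j$ and $(S_i,T_i) = (S_j,T_j)$, contradicting $i\neq j$.

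\textbf{Where the difficulty lies.} The two weight computations are routine; the crux is recognising that submodularity by itself only upgrades $S_i\cup S_j$ to yet another minimum $w$-cut rather than producing a contradiction, and that the needed extra leverage comes from the $\wst$-coordinate, which equals $|T_i\cap T_j|$ and is squeezed between $W'$ (from global minimality of the stitched weight) and $W'$ (from $T_i\cap T_j\subseteq T_i$). A minor point to handle carefully is verifying that $S_i\cup S_j$ and $S_i\cap S_j$ are proper nonempty subsets of $V$ — this is exactly where the contradiction hypothesis $T_i\cap T_j\neq\emptyset$ and the fact $s\in S_i\cap S_j$ get used.
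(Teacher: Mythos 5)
Your proof is correct. Part 2 is essentially identical to the paper's argument: both cuts have the same stitched weight, the $\wst$-coordinate of a cut containing $s$ counts $|T_i|$, and the equality clause of \Cref{observation:stitching-induces-a-lexicographic-order} forces $|T_i| = |T_j|$.

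For Part 1 your route differs in organization, though not in underlying mechanism. The paper observes that disjointness is an immediate corollary of \Cref{unique-st-min-cut}: if $t \in T_i \cap T_j$ then $(S_i, T_i)$ and $(S_j, T_j)$ are both minimum $s$-$t$ cuts (a global minimum cut separating $s$ from $t$ is certainly a minimum $s$-$t$ cut), contradicting the uniqueness established in that claim. You instead re-derive the submodular-uncrossing argument from scratch: upgrade $S_i\cup S_j$ to a minimum $w$-cut via submodularity and \Cref{fact: main-property-of-submodular-functions}, then squeeze $|T_i\cap T_j|$ between $W'$ (global lexicographic minimality of the stitched weight) and $W'$ (containment $T_i\cap T_j\subseteq T_i$ plus Part 2), forcing $T_i = T_j$. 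Both arguments rest on the same ingredients -- submodularity plus lexicographic stitching -- so this is really an unrolling of the proof of \Cref{unique-st-min-cut} rather than a new idea; the paper's version is shorter because it reuses that claim as a black box, while yours is more self-contained and, as a side effect, makes the dependence on Part 2 explicit (which the paper's route avoids entirely). You were also appropriately careful to verify that $S_i\cup S_j$ and $S_i\cap S_j$ are proper nonempty subsets, which is exactly where the contradiction hypothesis $T_i\cap T_j\neq\emptyset$ and the anchoring $s\in S_i\cap S_j$ enter.
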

\begin{proof}
\Cref{unique-st-min-cut} implies that the $T$s must be pairwise disjoint. Else, let $(S_i, T_i)$ and $(S_j, T_j)$ both be minimum cuts such that $s \in S_i \cap T_j$ and $t \in T_i \cap T_j$. These are both minimum $s$-$t$ cuts, which contradicts the uniqueness of the minimum $s$-$t$ cut. \\

\noindent
Now, consider any $(S_i, T_i)$ and $(S_j, T_j)$. Note that since these are both minimum cuts, in particular they have the same weight with respect to both $w$ and $\wst$ (by \Cref{observation:stitching-induces-a-lexicographic-order}). Thus, $\wst(\cut_{G'}(S_i)) = \wst(\cut_{G'}(S_j)) \implies |T_i| = |T_j|$. This completes the proof. 
\end{proof}

\noindent
In order to exploit this property, we now introduce a second type of edge set and weight assignment.


\begin{definition}[$\wi{i}$]
    Let $G = (V, E)$ be a graph and $s \in V$ be a source vertex. The weight assignment $\wi{i}:E \rightarrow \mathbb{Z}^+$ is defined as 
    \[ \wi{i}(e) =  \begin{cases} 
       1, & \text{if } e = (s, v) \text{ and }v<i \\
       0, & \text{otherwise.} \\
        \end{cases}
    \]
\end{definition}

\noindent
Note that $n \geq \sum_{e \in E} \wi{i}(e)$. Thus, we may set $|\wi{i}| = n$ when stitching $\wi{i}$ (see~\Cref{def:stitching}).\\ 

\noindent
We now need to work with three weight assignments instead of two, and so we introduce the concept of repeated stitching.

\begin{definition}[Repeated Stitching]
Let $G = (V, E)$ be a graph and $w_1, w_2, w_3:E \rightarrow \mathbb{Z}^+$ be edge weights. The weight assignment $[w_1, w_2, w_2]$ is defined as the repeated stitching $[[w_1, w_2], w_3]$. 
\end{definition}

\begin{claim}
\label{binary-search}
Let $G = (V, E)$ be a graph with edge weights $w$, and let $s$ be a source vertex. Let $G' = (V, E \cup \es)$, and let $\{(S_1, T_1), \dots, (S_k, T_k)\}$ be the set of global minimum cuts in $G'$ according to weights $[w, \wst]$, such that $s \in S_i \ \forall i \in [k]$. Furthermore, let $\mathcal{T} = \bigcup_{i \in [k]} T_i = \{t_1, t_2, \dots, t_{m}\}$. \\

\noindent
For any $x \in V$:
\begin{enumerate}
    \item If $x > t_{m}$, then every minimum cut $(S, T)$ of $G'$ w.r.t.\ weights $[w, \wst, \wi{x}]$ (with $s \in S$) satisfies $\wi{x}(\cut_{G'}(S)) = |T|$.
    \item If $x \leq t_{m-1}$, then either every minimum cut $(S, T)$ of $G'$ w.r.t.\ weights $[w, \wst, \wi{x}]$ (with $s \in S$) satisfies $\wi{x}(\cut_{G'}(S)) < |T| - 1$, or $G'$ under weights $[w, \wst, \wi{x}]$ has multiple minimum cuts (or both).
    \item If $t_{m-1} < x \leq t_m$, then the minimum cut $(S, T)$ of $G'$ w.r.t.\ weights $[w, \wst, \wi{x}]$ is unique and satisfies $\wi{x}(\cut_{G'}(S)) = |T| - 1$. Moreover, $(S, T)$ is exactly the cut $(S_i, T_i)$ such that $t_m \in T_i$.
\end{enumerate}
\end{claim}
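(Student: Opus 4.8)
The plan is to reduce the claim to an elementary counting statement about how the vertices of $\mathcal{T}$ smaller than $x$ distribute among the blocks $T_1,\dots,T_k$, and then read off the three cases. First I would record what the earlier results give. Assume without loss of generality $t_1 < t_2 < \dots < t_m$. By \Cref{lemma:cuts-are-disjoint-and-equal-sized}, the sets $T_1,\dots,T_k$ are pairwise disjoint and share a common size $\ell$; since their union is $\mathcal{T}$, this forces $m = k\ell$. For a cut-set $S$ with $s \in S$, the edge $(s,v)$ lies in $\cut_{G'}(S)$ exactly when $v \notin S$, and $\wi{x}$ vanishes off the star $\es$, so $\wi{x}(\cut_{G'}(S)) = |\{v \in V \setminus S : v < x\}|$. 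Since $[w,\wst,\wi{x}] = [[w,\wst],\wi{x}]$, \Cref{fact: stitching-preserves-minimality-of-cuts} applied to the weight pair $[w,\wst]$ and $\wi{x}$ shows that every minimum cut of $G'$ under $[w,\wst,\wi{x}]$ is a minimum cut of $G'$ under $[w,\wst]$, hence — taking the side containing $s$ — is one of $(S_1,T_1),\dots,(S_k,T_k)$. As all of these share the same $[w,\wst]$-weight, \Cref{observation:stitching-induces-a-lexicographic-order} shows that the minimum cuts of $G'$ under $[w,\wst,\wi{x}]$ are exactly those $(S_i,T_i)$ minimising $\wi{x}(\cut_{G'}(S_i)) = |T_i \cap \{1,\dots,x-1\}|$.

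Next I would set up the count. Let $p := |\{j \in [m] : t_j < x\}|$, so $\{t_1,\dots,t_p\} = \mathcal{T} \cap \{1,\dots,x-1\}$, and since $T_i \subseteq \mathcal{T}$ we get $|T_i \cap \{1,\dots,x-1\}| = \ell - a_i$, where $a_i := |T_i \cap \{t_{p+1},\dots,t_m\}|$ counts how many of the $m-p$ largest elements of $\mathcal{T}$ fall into the $i$-th block; note $\sum_i a_i = m - p$. Writing $M := \max_i a_i$, the previous paragraph says the minimum cuts of $G'$ under $[w,\wst,\wi{x}]$ are precisely the $(S_i,T_i)$ with $a_i = M$, and every such cut $(S,T)$ satisfies $\wi{x}(\cut_{G'}(S)) = \ell - M = |T| - M$.

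Now the three items fall out. If $x > t_m$, then $p = m$, so every $a_i = 0$, $M = 0$, and every minimum cut $(S,T)$ has $\wi{x}(\cut_{G'}(S)) = \ell = |T|$: this is item 1. If $t_{m-1} < x \le t_m$, then $p = m-1$ and $\{t_{p+1},\dots,t_m\} = \{t_m\}$, so a single block — the one containing $t_m$, call it $(S_{i_0},T_{i_0})$ — has $a_{i_0} = 1$ while all others have $a_i = 0$; thus $M = 1$ is attained uniquely, the minimum cut is the unique cut $(S_{i_0},T_{i_0})$ with $t_m \in T_{i_0}$, and it satisfies $\wi{x}(\cut_{G'}(S)) = \ell - 1 = |T| - 1$: this is item 3. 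If $x \le t_{m-1}$, then $p \le m-2$, so $\sum_i a_i = m - p \ge 2$: when $M \ge 2$, every minimum cut $(S,T)$ has $\wi{x}(\cut_{G'}(S)) = \ell - M \le \ell - 2 < |T| - 1$; when $M \le 1$, the relation $\sum_i a_i \ge 2$ forces at least two blocks to attain $a_i = 1 = M$, so $G'$ has at least two minimum cuts under $[w,\wst,\wi{x}]$ — together these give item 2.

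The step that needs genuine thought is this last case, since it is exactly where both alternatives of item 2 are unavoidable: depending on whether the "large" vertices of $\mathcal{T}$ accumulate in a single block (so the optimal cut over-counts, pushing $\wi{x}$ strictly below $|T| - 1$) or spread over several blocks (forcing a tie among minimum cuts). Everything else is bookkeeping once the translation to the integers $a_i$ — which rests squarely on the disjointness and equal sizes of the $T_i$ from \Cref{lemma:cuts-are-disjoint-and-equal-sized} — is in place; I would also take care throughout to name each cut by the cut-set containing $s$, so that it has a single well-defined representative.
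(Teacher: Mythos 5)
Your proof is correct and follows essentially the same route as the paper's: invoke \Cref{lemma:cuts-are-disjoint-and-equal-sized} for the disjointness and equal size of the $T_i$, use \Cref{fact: stitching-preserves-minimality-of-cuts} to reduce to comparing $\wi{x}(\cut_{G'}(S_i))$ among the fixed family $(S_1,T_1),\dots,(S_k,T_k)$, rewrite that quantity as $|T_i| - |T_i \cap \{v \ge x\}|$, and split the three cases by how the elements of $\mathcal{T}$ that are $\geq x$ distribute over the blocks. Your bookkeeping via $p$, $a_i$, $M$ just names the counts the paper carries out inline; the case analysis (in particular the two sub-cases of item 2, concentrated vs.\ spread out) is identical.
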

\begin{proof}
By \Cref{lemma:cuts-are-disjoint-and-equal-sized} then, all $T_i$ are disjoint and equal sized. By \Cref{fact: stitching-preserves-minimality-of-cuts}, any minimum cut $(S, T)$ of $G'$ under weights $[w, \wst, \wi{x}]$, is a minimum cut of $G'$ under weights $[w, \wst$]. Thus, $(S, T) = (S_i, T_i)$ for some $i \in [k]$, and more importantly $\wi{x}(S) = \min_{i \in [k]} \wi{x}(S_i)$.\\

\noindent
For any cut $S_i$, we have $\wi{x}(\cut_{G'}(S_i)) = |T_i \cap \{v \in V \mid v < x\}| = |T_i \setminus \{v \in V \mid v \geq x\}| = |T| - |\{v \in T \mid v \geq x\}|$. \\

\noindent
Now, we handle the cases individually:
\begin{enumerate}
    \item If $x > t_m$, we have that $|\mathcal{T} \cap \{v \in V \mid v \geq x\}| = \emptyset$, which implies that $\forall i \in [k]$, $\wi{x}(\cut_{G'}(S_i)) = |T_i|$. Thus, $\wi{x}(\cut_{G'}(S)) = |T|$. 
    \item If $x \leq t_{m-1}$, we have $|\mathcal{T} \setminus \{v \in V \mid v \geq x\}| \geq 2$. 
    \begin{enumerate}
    \item If there exists $i \in [k]$ such that $T_i \cap \{v \in V \mid v \geq x\} \geq 2$, then $\wi{x}(\cut_{G'}(S)) \leq \wi{x}(\cut_{G'}(S_i)) \leq |T_i| - 2 < |T_i| - 1 = |T| - 1$. Thus, we end up in the first case.
    
    \item If instead we have that $\forall i \in [k]$, $|T_i \cap \{v \in V \mid v \geq x\}| \leq 1$, then since the $T_i$ are disjoint, there must exist $i, j \in [k]$ such that $|T_i \cap \{v \in V \mid v \geq x\}| = |T_j \cap \{v \in V \mid v \geq x\}| = 1$, and we thus have that both $S_i$ and $S_j$ are minimum cuts of $G'$ under weights $[w, \wst, \wi{x}]$. This is the second case, where the minimum cut is not unique.
    \end{enumerate}
    
    \item If $t_{m-1} < x \leq t_m$. Let $(S_i, T_i)$ be the cut such that $t_m \in T_i$. Simply note that for all $j \neq i$, $\wi{x}(\cut_{G'}(S_j)) = |T_j| = |T|$, whereas $\wi{x}(\cut_{G'}(S_i)) = |T_i| - 1 = |T| - 1$. Thus, we have that $(S, T) = (S_i, T_i)$ is the unique minimum cut of $G'$ under weights $[w, \wst, \wi{x}]$ and it satisfies $\wi{x}(\cut_{G'}(S)) = |T| - 1$.  
\end{enumerate}

\noindent
This completes the proof.
\end{proof}

\noindent
The last ingredient we need is an algorithm that checks whether the minimum cut of a graph is unique.
We describe it in~\cref{alg:uniqueness-test}.

\begin{algorithm}[H]
\begin{algorithmic}[1]
\Require Graph $G = (V, E)$ with edge weights $w:E \rightarrow \mathbb{Z}^+$; oracle access to  a randomised algorithm $\mathcal{A}$ for the global minimum cut problem on simple weighted graphs.

\Ensure A unique cut in $G = (V,E)$ w.r.t.\ $w$ if it exists and \textbf{False} otherwise
    \State Run $\mathcal{A}$ on $G$ with edge weights $w$. 
    \State Let the cut returned be $(S, T)$ such that $s \in S$.
    \State Let $G' = (V, E \cup \es)$. Extend $w$ to the domain $E \cup \es$ by $w(e) = 0$ for all $e \in \es \setminus E$.
    \State  Let $w_1:E \cup \es  \rightarrow \mathbb{Z}^+$ be the edge weight assignment defined by $w_1(e) = 1$ if $e = (s, v) \in \es \cap \cut_{G'}(S)$, and $w_1(e) = 0$, otherwise. 
    \State  Select an arbitrary vertex $t \in T$.
    \State  Let $\bar{G} = (V, E \cup E_{\text{star}}^t)$.
    \State  Let $w_2:E \cup E_{\text{star}}^t  \rightarrow \mathbb{Z}^+$ be the edge weight assignment defined by $w_2(e) = 1$ if $e = (t, v) \in E_{\text{star}}^t \cap \cut_{G'}(S)$, and $w_2(e) = 0$, otherwise. 
    \State Run $\mathcal{A}$ on $G'$ with edge weights $[w, w_1]$. Let the cut returned be $(S', T')$ such that $s \in S'$.
    \State Run $\mathcal{A}$ on $\bar{G}$ with edge weights $[w, w_2]$. Let the cut returned be $(\bar{S}, \bar{T})$ such that $s \in \bar{S}$. 
    \State \textbf{return} $(S,T)$ if $S = S' = \bar{S}$ and \textbf{False} otherwise.
\end{algorithmic}
\caption{$\uqtest$}
\label{alg:uniqueness-test}
\end{algorithm}

\begin{theorem}
\label{theorem:uq-test-is-correct}
Let $\mathcal{A}$ be a randomised algorithm which:
\begin{enumerate}
    \item Takes as input a simple graph $G = (V, E)$ along with positive integer edge weights $w:E \rightarrow \mathbb{Z}^+$, and
    \item Outputs an cut $(S, T)$ of $G$ such that
    \[\Pr[S \text{ is a minimum cut of }G \text{ under edge weights } w] \geq \rho\]
\end{enumerate}

\noindent
Then, $\uqtest$ is a randomised algorithm which:
\begin{enumerate}
    \item Takes as input a simple graph $G = (V, E)$ along with positive integer edge weights $w: E \rightarrow \mathbb{Z}^+$, and
    \item Decides whether the minimum cut of $G$ is unique with probability $\geq 1 - 3(1-\rho)$,
    \item Using only $O(1)$ calls to $\mathcal{A}$. 
\end{enumerate}
\end{theorem}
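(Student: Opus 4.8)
The plan is to reduce the claimed probabilistic guarantee to a purely combinatorial statement by conditioning on the good event $\mathcal{E}$ that each of the three invocations of $\mathcal{A}$ in \cref{alg:uniqueness-test} returns a genuine minimum cut of the graph it is run on: the initial run on $G$ under $w$, the run on $G'$ under $[w,w_1]$, and the run on $\bar G$ under $[w,w_2]$. A union bound gives $\Pr[\mathcal{E}] \ge 1 - 3(1-\rho)$, and the algorithm plainly makes $O(1)$ calls to $\mathcal{A}$, so it suffices to show that, whenever $\mathcal{E}$ holds, $\uqtest$ outputs the unique minimum cut of $G$ if one exists and \textbf{False} otherwise. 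Two bookkeeping identities will drive the argument, both immediate from the definitions of $w_1$ and $w_2$ together with $s \in S$ and $t \in T$: for any cut-set $A$ with $s \in A$ one has $w_1(\cut_{G'}(A)) = |T \setminus A|$; and for any cut-set $B$ with $s \in B$ one has $w_2(\cut_{\bar G}(B)) = |S \cap B|$ if $t \notin B$, while $w_2(\cut_{\bar G}(B)) = |S \setminus B|$ if $t \in B$. I will also use that, by the argument of \cref{claim: adding-0-weight-edges-is-ok}, the minimum cuts of $G$ under $w$, of $G'$ under the $0$-extension of $w$, and of $\bar G$ under the $0$-extension of $w$ are the same; combined with \cref{fact: stitching-preserves-minimality-of-cuts} and the lexicographic behaviour of stitching (\cref{observation:stitching-induces-a-lexicographic-order}), this means a cut minimises $[w,w_1]$ on $G'$ exactly when it is a minimum cut of $G$ that additionally minimises $w_1(\cut_{G'}(\cdot))$ among all minimum cuts, and similarly for $[w,w_2]$ on $\bar G$.

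In the uniqueness case, suppose $(S^*,T^*)$ is the unique minimum cut of $G$. Then $(S^*,T^*)$ is also the unique minimum cut of $G'$ under $w$ and of $\bar G$ under $w$, hence the unique minimiser of $[w,w_1]$ on $G'$ and of $[w,w_2]$ on $\bar G$. Under $\mathcal{E}$, all three runs therefore output $(S^*,T^*)$, oriented with $s$ on the first side, so $S = S' = \bar S = S^*$ and $\uqtest$ returns $(S,T) = (S^*,T^*)$, as wanted.

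The crux is the non-uniqueness case. Here the initial run returns some minimum cut $(S,T)$ of $G$ with $s \in S$, and since the minimum cut is not unique there is a second minimum cut; orient it as $(\hat S, \hat T)$ with $s \in \hat S$, so $\hat S \ne S$ and $\hat S \ne T$. I split on whether $\hat S$ meets $T$. If $\hat S \cap T \ne \emptyset$, then $|T \setminus \hat S| < |T| = |T \setminus S|$, so $w_1(\cut_{G'}(\hat S)) < w_1(\cut_{G'}(S))$ by the first identity; since $(S,T)$ and $(\hat S,\hat T)$ are both minimum cuts of $G'$ under $w$, \cref{observation:stitching-induces-a-lexicographic-order} gives $[w,w_1](\cut_{G'}(\hat S)) < [w,w_1](\cut_{G'}(S))$, so $(S,T)$ does not minimise $[w,w_1]$ on $G'$, and under $\mathcal{E}$ the run on $G'$ returns a cut $(S',T')$ with $S' \ne S$. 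If instead $\hat S \cap T = \emptyset$, i.e.\ $\hat S \subsetneq S$, then $|\hat S| < |S|$, and $t \in T$ forces $t \notin \hat S$; by the second identity $w_2(\cut_{\bar G}(\hat S)) = |S \cap \hat S| = |\hat S| < |S| = |S \cap S| = w_2(\cut_{\bar G}(S))$, so by the same reasoning $(S,T)$ does not minimise $[w,w_2]$ on $\bar G$, and under $\mathcal{E}$ the run on $\bar G$ returns a cut $(\bar S, \bar T)$ with $\bar S \ne S$. Either way the final test $S = S' = \bar S$ fails and $\uqtest$ returns \textbf{False}, as wanted.

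I expect the only genuinely non-routine ingredient to be this case split: the point is that $w_1$ by itself does not suffice precisely when every alternative minimum cut is a proper subset of $S$ — a phenomenon visible already in small examples and foreshadowed by the path instance preceding \cref{lemma:cuts-are-disjoint-and-equal-sized} — and that this residual case is exactly what the second auxiliary sink $t$ and the weight $w_2$ handle, through the strict cardinality inequality $|\hat S| < |S|$. Everything else is bookkeeping: verifying the two weight identities from the definitions of $w_1,w_2$, invoking \cref{claim: adding-0-weight-edges-is-ok}, \cref{fact: stitching-preserves-minimality-of-cuts}, and \cref{observation:stitching-induces-a-lexicographic-order}, and the union bound. (The functions $w_1,w_2$ take value $0$; if one insists that $\mathcal{A}$ receive strictly positive weights, one may equivalently insert into $G'$ only the star edges lying in $\cut_{G'}(S)$ and into $\bar G$ only those lying in $\cut_{G'}(S)$, which changes no cut value and leaves the analysis intact.)
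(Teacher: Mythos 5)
Your proof is correct and takes essentially the same route as the paper's: condition on all three calls to $\mathcal{A}$ returning genuine minimum cuts, observe that uniqueness forces $S = S' = \bar S$, and in the non-unique case use the cardinality identities for $w_1$ and $w_2$ to show one of the auxiliary runs must output a cut different from $(S,T)$. The only difference is organizational — the paper assumes $(S',T')=(S,T)$ and deduces $\hat S \subsetneq S$ before invoking $w_2$, whereas you split up front on whether $\hat S \cap T = \emptyset$; these are contrapositives of the same step, so the mathematical content is identical.
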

\begin{proof}
\noindent
First, note that $\uqtest$ makes exactly 3 calls to $\mathcal{A}$. Assume that both calls return a minimum cut. This happens with probability $\geq 1 - 3(1 - \rho)$.\\

\noindent
If the minimum cut $(S,T)$ of $G$ w.r.t.\ weights $w$ is unique, then the cuts $(S', T')$ and $(\bar{S}, \bar{T})$ must both be equal to $(S, T)$. This is true for $(S', T')$ because it is the minimum cut of $G' = (V, E \cup \es)$ under $[w, w_1]$, which by \Cref{fact: stitching-preserves-minimality-of-cuts} is a minimum cut of $G' = (V, E \cup \es)$ under $w$, which by \Cref{claim: adding-0-weight-edges-is-ok}, is a minimum cut of $G$ under $w$. The same proof works for $(\bar{S}, \bar{T})$. Thus, in this case, $\uqtest$ correctly decides that the minimum cut of $G$ under weights $w$ is unique.\\

\noindent
Assume that the minimum cut of $G$ under weights $w$ is not unique. Let $(S, T)$  be one of them, and let $(\hat{S}, \hat{T})$ be another. If $(S', T') \neq (S, T)$, then $\uqtest$ correctly decides that the minimum cut of $G$ under weights $w$ is not unique. Thus, assume that $(S', T') = (S, T)$. We will show that in this case $(\bar{S}, \bar{T}) \neq (S, T)$. First, we have $w(\cut_{G'}S) = w(\cut_{G'}\hat{S})$. Thus, if $(S', T') = (S, T)$, we must have $w_1(\cut_{G'}S) \leq w_1(\cut_{G'}\hat{S}) \iff |T| \leq |\hat{T} \cap T| \implies T \subsetneq \hat{T}$. Thus, note that we have $t \in \hat{T}$ and $|\hat{S}| < |S|$. This implies $w_2(\cut_{\bar{G}}S) = |S| > |\hat{S}| = w_2(\cut_{\bar{G}} \hat{S})$. Thus, $(\bar{S}, \bar{T}) \neq (S, T)$. This shows that $\uqtest$ correctly decides that the minimum cut of $G$ w.r.t.\ weights $w$ is not unique. 
\end{proof}

\noindent
Next, we describe our pseudodeterministic global minimum cut algorithm in~\cref{alg:global-min-cut}. It runs both an arbitrary global minimum cut algorithm $\mathcal{A}$ as well as $\uqtest$. We then complete the proof of~\cref{theorem:pseudodeterministic-global-cut-algorithm-is-correct}. 

\begin{algorithm}[H]
\begin{algorithmic}[1]

\Require Graph $G = (V, E)$ with edge weights $w:E \rightarrow \mathbb{Z}^+$; oracles access to $\mathcal{A}$, a randomised algorithm for the global minimum cut problem on simple weighted graphs

\Ensure Global Minimum Cut $(S,T)$
    \State Fix an arbitrary vertex $s \in V$ to be the source vertex. 
    \State Extend $w$ to the domain $E \cup \es$ by $w(e) = 0$ for all $e \in \es \setminus E$.
    \State Define $G' = (V, E \cup \es)$.
    \State If $G'$ has a unique global minimum cut $(S, T)$ under weights $[w, \wst]$, output $(S, T)$ and terminate. \label{step:unique-cut-return}
    \State $l \gets 1, u \gets n$
    \While {$l \leq u$} 
\Comment{Do binary search over the range $[l = 1, u = n]$.}
   
       { \item Let $x = \lfloor (l+u)/2 \rfloor$ and $(S, T) = \mathcal{A}(G', [w, \wst, \wi{x}])$
        \item If $(S, T)$ is the unique global minimum cut in $G'$ with weights $[w, \wst, \wi{x}]$, and $\wi{x}(\cut_{G'}(S)) = |T| - 1$, output $(S,T)$ and terminate. \label{step:unique-cut-return-2}
        \item If $\wi{x}(\cut_{G'}(S)) = |T|$, let $u = x-1$ and continue the binary search.
        \item Else, let $l = x+1$ and continue the binary search.}
    \EndWhile

\end{algorithmic}
\caption{\textsc{Pseudodeterministic Algorithm $\psdgb$ for Global Minimum Cut}}
\label{alg:global-min-cut}
\end{algorithm}

\begin{proof}[Proof of~\Cref{theorem:pseudodeterministic-global-cut-algorithm-is-correct}]
\noindent
Assume that every time $\psdgb$ uses a randomised algorithm as a subroutine, it returns the correct answer. This happens with probability $\geq \rho$ for each use of $\mathcal{A}$, and with probability $\geq 1 - (1-\rho)\cdot 3$ for the uniqueness test. Since we use the uniqueness test and $\mathcal{A}$ at most $O(\log n)$ many times, by a union bound, all subroutines work correctly with probability $\geq 1 - (1 - \rho) \cdot O(\log n)$.\\

\noindent
Now, let $(S_1, T_1), \dots, (S_k, T_k)$ be the minimum cuts of $G' = (V, E \cup \es)$ under weights $[w, \wst]$ (such that $\forall i \in [k], \ s \in S_i )$, and let $\mathcal{T} = \bigcup_{i \in [k]} T_i = \{t_1, \dots, t_m\}$. Let $i \in [k]$ be such that $t_m \in T_i$. \Cref{lemma:cuts-are-disjoint-and-equal-sized} shows that the $T_j$ are all disjoint, and thus this choice of $i$ is unique. We will show that $\psdgb$ returns $(S^*, T^*) = (S_i, T_i)$.\\

\noindent
First, if $k=1$, then~\cref{step:unique-cut-return} of~\cref{alg:global-min-cut}  necessarily returns $(S^*, T^*) = (S_i, T_i)$. Thus, assume that $k>1$. In this case, we perform the binary search. Now, simply note that the conditions of \Cref{binary-search} ensure that our binary search procedure finds $x \in (t_{m-1}, t_m]$, at which point $G'$ w.r.t.\ weights $[w, \wst, \wi{x}]$ has a unique minimum cut, which is $(S_i, T_i)$, and $\psdgb$ algorithm outputs this in~\cref{step:unique-cut-return-2}. \\

\noindent
Thus, if every randomised subroutine works correctly, $\psdgb$ outputs $(S^*, T^*)$. As discussed in the first paragraph of the proof, this happens with probability $\geq 1 - (1 - \rho) \cdot O(\log n)$. This completes the proof. 
\end{proof}

\section{Streaming, Cut Query, and Parallel Models} \label{sec:implementation}
In this section, we will briefly argue that \Cref{alg:global-min-cut} can be implemented efficiently across sequential, streaming, cut query, and parallel models. 

\subsection{Sequential}
Say $\mathcal{A}$ runs in time $T(n, m)$ on a graph $G$ with $n$ vertices and $m$ edges. Clearly, \Cref{alg:uniqueness-test} can be implemented such that it runs in $O(m + n + T(n, m + n))$ time. Then, \Cref{alg:global-min-cut} can be implemented to run in $O((m + n + T(n, m + n)\log n \log\log n))$ time. Combined with the randomized $O(m\log^2n)$ time algorithm of \cite{GawrychowskiMozesWeimann24}, this gives us a pseudodeterministic $O(m\log^3n\log\log n)$ time algorithm for global minimum cut, thereby proving \Cref{thm:sequential}. This additional $\log \log n$ factor is in order to boost the success probability of $\mathcal{A}$ from $2/3$ to $1 - 1/\Omega(\log n)$. We incur a similar cost in the other models, but it is hidden under the tilde notation. 

\subsection{Parallel}
Both \Cref{alg:global-min-cut} and \Cref{alg:uniqueness-test} involve only the following non-trivial steps:
\begin{enumerate}
    \item Adding the edges $\es$ of a star graph to the graph $G$. This can be done with $\tilde{O}(n+m)$ work and $\tilde{O}(1)$ depth by simply adding $(s, v)$ for all $v \in V$ in parallel. It is easy to avoid adding a duplicate edge by maintaining a lookup table indexed by $v \in V$, indicating whether $(s, v) \in E$. 
    \item Stitching a weight assignment $w'$ with a weight assignment $w$. Our algorithm only does this when $n$ is an upper bound on $\sum_{e \in E} w'(e)$. Thus, we simply need to replace the weight $w$ of each edge with $(n+1) \cdot w(e) + w'(e)$. By doing this in parallel for each $e \in E$, this takes $\tilde{O}(n+m)$ work and $\tilde{O}(1)$ depth.
    \item Testing whether two cut-sets $S$ and $S'$ are equal. There are many easy ways to do this with $\tilde{O}(n+m)$ work and $\tilde{O}(1)$ depth. 
    \item Computing the weight of a cut with respect to an edge weight assignment $w$. By simply iterating over all edges in parallel, and then adding up their contribution to the cut in a tree-like fashion, this can be done with $\tilde{O}(n+m)$ work and $\tilde{O}(1)$ depth. 
\end{enumerate}

\noindent
Say $\mathcal{A}$ takes work $W(n, m)$ and depth $D(n, m)$ on a graph with $n$ vertices and $m$ edges. With the aforementioned observations, it's easy to see that \Cref{alg:uniqueness-test} and \Cref{alg:global-min-cut} can be implemented with work $\tilde{O}(n + m + W(n, m + n))$ and time $\tilde{O}(D(n, m + n))$. The $\tilde{O}$ absorbs the $\log n$ factor from the binary search in \Cref{alg:global-min-cut}. Combined with the $\tilde{O}(n + m)$ work and $\tilde{O}(1)$ depth algorithm of~\cite{AndersonBlelloch23}, we obtain a pseudodeterministic $\tilde{O}(m)$ work and $\tilde{O}(1)$ depth parallel algorithm, thereby proving \Cref{thm:parallel}. 

\subsection{Streaming}
Both \Cref{alg:global-min-cut} and \Cref{alg:uniqueness-test} involve only the following non-trivial steps:
\begin{enumerate}
    \item Adding the edges $\es$ of a star graph to the graph $G$. This can be done by simply suffixing the edges $\es \setminus E$ to the input stream with weights $0$. In order to do this, one needs to track the edges $E \cap \es$ in the input stream. This is easy to do using $O(n \log n)$ space since $|\es| = n$ -- one can simply maintain a list of all edges of $\es$ and whether they have already appeared in the stream.
    \item Stitching a weight assignment $w'$ with a weight assignment $w$. Our algorithm only does this when $n$ is an upper bound on $\sum_{e \in E} w'(e)$. Moreover, our algorithm can always compute $w'(e)$ for any edge $e$ given only the endpoints of $e$. Thus, every time we encounter an edge $e$ in the stream with edge $w(e)$, we can simply change its weight to $w(e) \cdot (n+1) + w'(e)$. This part implicitly requires space to store the information to compute $w'$, which is just $s$, if $w' = \wst$, or $x$ and $s$, if $w' = \wi{x}$, or the cut-set $S$ along with vertices $s$ and $t$ if $w'$ is $w_1$ or $w_2$. Storing the cut-set $S$ is the most expensive and only requires $O(n \log n)$ space.
    \item Testing whether two cut-sets $S$ and $S'$ are equal. Since we have no time restriction in the streaming model, once we have the sets $S$ and $S'$ stored using $O(n \log n)$ space, this is completely trivial to do using no additional space.  
    \item Computing the weight of a cut $(S, T)$ with respect to a weight assignment $w'$. In our algorithm, we only need to do this when $w' = \wi{x}$, and the algorithm already knows $x$. Thus, this is easy to do: we maintain $W$, initialised to $0$. Every time an edge $e$ crossing the cut $(S, T)$ is added to the graph in the stream, we modify $W = W + w'(e)$, and every time an edge $e$ is deleted from the graph, we modify $W = W - w'(e)$. The only space required here is to store $x$ and $W$. This requires $O(\log n)$ space. 
\end{enumerate}

\noindent
Let $\mathcal{A}$ be a streaming algorithm that uses $S(n, m)$ space and makes $P(n, m)$ passes over the stream. With the aforementioned observations, \Cref{alg:uniqueness-test} and \Cref{alg:global-min-cut} can be implemented with space $\tilde{O}((S(n, m) + n)$ and $O(P(n, m) \cdot \log(n))$ passes over the stream. Combined with the $\tilde{O}(n)$ space and $O(\log n)$ passes algorithm of~\cite{MukhopadhyayNanongkai20}, we obtain a pseudodeterministic $\tilde{O}(n)$ space and $O(\log^2n)$ passes streaming algorithm, thereby proving \Cref{thm:streaming}.

\subsection{Cut Query}
For cut query algorithms, the argument is largely very simple, since the only restriction in the model is on queries to the graph and not on any computational aspect. The only non-trivial implementation question is about how to stitch weights $[w, w']$ in the cut query model. Formally, given a cut oracle $\mathcal{O}$, where $\mathcal{O}(S) = w(\cut_{G}(S))$, we want to design an oracle $\mathcal{O}'$ where $\mathcal{O}'(S) = [w, w'](\cut_{G}(S))$. This is easy to do because of our choices of stitched weights and graphs. In both \Cref{alg:uniqueness-test} and \Cref{alg:global-min-cut}, $w'(\cut_{G}(S))$, for any $\emptyset \subsetneq S \subsetneq V$, can be computed given only $S$, and the parameters of the weight function $w'$. Assume $s \in S$:
\begin{enumerate}
    \item $\wst(\cut_{G}(S))$ is equal to $n - |S|$. 
    \item $\wi{x}(\cut_{G}(S))$ is equal to $|\{v \in V \mid v < x\} \cap (V \setminus S)|$.
    \item Let $\hat{S}$ be the cut-set with respect to which $w_1$ is defined, such that $s \in \hat{S}$. Then $w_1(\hat{S}) = |(V \setminus S) \cap (V \setminus \hat{S})|$. The same argument holds for $w_2$ and $t$. 
\end{enumerate}

\noindent
Thus, the oracle $\mathcal{O}'$ can simply return as its answer $\mathcal{O}'(S) = \mathcal{O}(S) \cdot (n+1) + w'(\cut_{G}(S))$\\

\noindent
Let $\mathcal{A}$ be a cut-query algorithm that makes $C(n, m)$ many cut queries. With the aforementioned observation, it's easy to see that \Cref{alg:global-min-cut} and \Cref{alg:uniqueness-test} can be implemented such that they require $O(C(n, m) \cdot \log n)$ many queries. Thus, combined with the algorithm of~\cite{MukhopadhyayNanongkai20}, we obtain an algorithm which requires $\tilde{O}(n)$ many cut queries, thereby proving \Cref{thm:cut-query}.

\DeclareUrlCommand{\Doi}{\urlstyle{sf}}
\renewcommand{\path}[1]{\small\Doi{#1}}
\renewcommand{\url}[1]{\href{#1}{\small\Doi{#1}}}
\newpage
\bibliographystyle{plainurl}
\bibliography{bibliography}
\newpage 
\appendix
\end{document}